\documentclass[pdflatex,sn-mathphys-num]{sn-jnl}


\usepackage{graphicx}%
\usepackage{multirow}%
\usepackage{amsmath,amssymb,amsfonts}%
\usepackage{amsthm}%
\usepackage{mathrsfs}%
\usepackage[title]{appendix}%
\usepackage{xcolor}%
\usepackage{textcomp}%
\usepackage{manyfoot}%
\usepackage{booktabs}%
\usepackage{algorithm}%
\usepackage{algorithmicx}%
\usepackage{algpseudocode}%
\usepackage{listings}%

\usepackage{multirow}
\usepackage{subfigure}
\usepackage{adjustbox} 
\usepackage{array}
\usepackage{xcolor}
\definecolor{lightsteelblue}{rgb}{0.6901960784313725, 0.7686274509803922, 0.8705882352941177}
\definecolor{cornflowerblue}{rgb}{0.39215686274509803, 0.5843137254901961, 0.9294117647058824}
\definecolor{steelblue}{rgb}{0.27450980392156865, 0.5098039215686274, 0.7058823529411765}
\definecolor{firebrick}{rgb}{10.6980392156862745, 0.13333333333333333, 0.13333333333333333}
\definecolor{lightcoral}{rgb}{0.9411764705882353, 0.5019607843137255, 0.5019607843137255}

\graphicspath{ {./figures/}}


\theoremstyle{thmstyleone}%
\newtheorem{theorem}{Theorem}
%

\theoremstyle{thmstyletwo}%

\theoremstyle{thmstylethree}%
\newtheorem{definition}{Definition}%

\raggedbottom

\begin{document}

\title[A systematic comparison of measures for publishing
k-anonymous network data]{A systematic comparison of measures for publishing
$k$-anonymous social network data}


\author*[1,2]{\fnm{Rachel G.} \spfx{de} \sur{Jong}}\email{r.g.de.jong@liacs.leidenuniv.nl}

\author[2, 1]{\fnm{Mark P. J.} \spfx{van der} \sur{Loo}}

\author[1]{\fnm{Frank W.} \sur{Takes}}

\affil*[1]{\orgdiv{LIACS}, \orgname{Universiteit Leiden}, \orgaddress{\street{Einsteinweg 55}, \city{Leiden}, \postcode{2333 CC}, \country{the Netherlands}}}

\affil[2]{\orgname{CBS (Statistics Netherlands)}, \orgaddress{\street{Henri Faasdreef 312}, \city{Den Haag}, \postcode{2492 JP}, \country{the Netherlands}}}

\abstract{
    Sharing or publishing social network data while accounting for privacy of individuals is a difficult task due to the interconnectedness of nodes in networks. 
    A key question in $k$-anonymity, a widely studied notion of privacy, is how to \emph{measure} the anonymity of an individual, as this determines the attacker scenarios one protects against.
    In this paper, we systematically compare the most prominent anonymity measures from the literature in terms of the `completeness' and `reach' of the structural information they take into account. 
    We present a theoretical characterization and a distance-parametrized strictness ordering of the existing measures for $k$-anonymity in networks.
    In addition, we conduct empirical experiments on a wide range of real-world network datasets with up to millions of edges. 
    Our findings reveal that the choice of the measure significantly impacts the measured level of anonymity and hence the effectiveness of the corresponding attacker scenario, the privacy vs. utility trade-off, and computational cost.  
    Surprisingly, we find that the anonymity measure representing the most effective attacker scenario considers a greater node vicinity yet utilizes only limited structural information and therewith minimal computational resources. 
    Overall, the insights provided in this work offer researchers and practitioners practical guidance  for selecting appropriate anonymity measures when sharing or publishing social network data under privacy constraints. 
}

\keywords{$k$-Anonymity, Social Networks, Privacy, Algorithms, Utility, Statistical Disclosure Control}

\maketitle

\section{Introduction}\label{sec:intro}
Network science is a research field centered around modeling real-world phenomena using nodes, representing entities, and edges representing some type of meaningful connection between these entities.
Various methods have been introduced to study networks, for example, to find important nodes~\cite{saxena2020centrality}, communities~\cite{leskovec2010empirical} or anomalies~\cite{bhuyan2013network}.
In social networks, these methods can be applied to model disease spread~\cite{azizi2020epidemics}, measure segregation~\cite{kazmina2024socio, bojanowski2014measuring}, or identify suspicious entities in (social) networks~\cite{savage2014anomaly}. 
For such research purposes it is beneficial to have access to networks that contain realistic information on individual entities, for example, about people.
Often data from social media platforms such as Meta/Facebook or X/Twitter are used to construct networks, but also population-scale networks generated from official government register data  where the set of nodes represents the entire population of a country~\cite{bokanyi2023anatomy, van2023whole} are frequent sources for social network analysis research. 

What abovementioned network datasets have in common is that sharing or publicly publishing such large-scale social networks, even if it is for research purposes, comes with considerable privacy risks. 
Even after personally identifiable information is removed and replaced with artificial identifiers, i.e., pseudonymization, a network is not sufficiently protected against disclosure risk, as the structure of the network itself can be used to identify individuals~\cite{backstrom2007wherefore, romanini2021privacy, dejong2023effect}.
As a consequence, real-world social network data is often not shared or can only be accessed under strict conditions.
At the same time, promoted by, e.g., the FAIR data initiative~\cite{wilkinson2016fair}, researchers are often expected to digitally publish social network datasets used together with their paper for reproducibility and reusability~\cite{neal2024recommendations}. 
It is important to note that in this concrete scenario, it is not possible to adopt techniques based on differential privacy. 
Such methods require generating synthetic data at the cost of analytical fidelity, or only work in ``online'' systems that hold the original data and dynamically add noise as the graph is queried~\cite{jiang2021applications, wang2013preserving}. 
Concretely, in this work we ultimately focus on obtaining a publishable file of a network dataset that remains representative of the real system, while protecting individual privacy. 

The problem of sharing social network data in some form while accounting for the privacy of people represented has four important aspects that should be taken into account. 
This includes 1) the type of desired output, 2) the `attacker scenarios' against which should be protected, 
3) the utility of the network data in terms of a) which structural properties should be preserved and b) performance on various downstream tasks for which the network will be used, such as detecting communities, and 4) computational cost.
The most commonly used methods for the protection of network data have initially been introduced in the context of Statistical Disclosure Control (SDC)~\cite{hundepool2012statistical, willenborg2012elements}, aiming to protect tabular data, i.e., data where each entity is described by a number of attributes.
In this field one distinguishes between \emph{key attributes}, \emph{quasi-identifiers} and \emph{sensitive attributes}.
Key attributes, such as name, personal identifier or phone number,  can uniquely identify a person and must be removed from anonymized data.
Quasi-identifiers are attributes which may be known to the attacker, who aims to de-anonymize the network, and could possibly be used for identification.
Sensitive attributes, such as income or illness, are unknown and of interest to the attacker.
Methods for SDC aim to protect against identity disclosure, i.e., finding the person in the tabular dataset, or attribute disclosure, i.e., deriving sensitive attributes of individuals.
The utility of the resulting datasets is then measured in terms of general utility, the distribution of attribute values in the data, hence describing the
data itself, and specific utility, concerning results of analyzing the data~\cite{snoke2018general}. 

The most commonly used approaches for anonymizing tabular data are differential privacy~\cite{dwork_differential_2008, dwork2006calibrating, dwork2006differential} and $k$-anonymity~\cite{sweeney2002k}.
In differential privacy, a mechanism gives noisy answers to user queries such that the privacy of entities is preserved.
For $k$-anonymity, the aim is to create a version of the data such that for each entity there are at least $k-1$ equivalent candidates so that any individual can be identified with a probability of at most $1/k$, thereby protecting against identity disclosure.
Typically, a value of $k>1$ is used.
$k$-Anonymity is then achieved by a combination of generalization by making information on, for example, age or zip code less specific, and suppressing values in the dataset.
However, in a $k$-anonymous dataset it could still be possible for an attacker to obtain information if all $k$ entities have the same value for a sensitive attribute or if the distribution in a certain class deviates from the distribution over the entire dataset.
Hence, $k$-anonymity in itself does not sufficiently protect against attribute disclosure.
This is counteracted by extending $k$-anonymity with $\ell$-diversity~\cite{machanavajjhala_l-diversity_2007}, $t$-closeness~\cite{li_t-closeness_2007} or ($\alpha$, $k$)-anonymity~\cite{wong2006alpha}.
Other approaches incorporate synthetic data~\cite{drechsler2011synthetic} by either generating a new dataset from scratch or combining synthetic data with the actual data.
Yet another approach is presented in~\cite{xiao2006anatomy}, in which the quasi-identifiers and sensitive attributes are shared in separate tables such that it can not be inferred which set of sensitive attributes corresponds to which set of quasi-identifiers.

Now focusing specifically on network data, based on the methods in SDC several generically applicable approaches have been introduced based on differential privacy~\cite{sala2011sharing, proserpio2012calibrating, wang2013preserving} as well as $k$-anonymity~\cite{liu2008towards, hay2008resisting, romanini2021privacy, zou2009k}.
Additionally, other network-specific anonymization techniques have been introduced in the literature such as randomization~\cite{ying2008randomizing, liu2016smartwalk, mittal2012preserving} and clustering~\cite{campan2008data, hay2008resisting, bhagat2009class, liu2016linkmirage, yazdanjue2020evolutionary, ford2009p}.
These methods ensure that a possible attacker is unable to either identify nodes by ensuring \emph{node privacy}, or the fact that there is a connection between two entities, i.e., \emph{edge privacy}.

The approach we focus on in this paper is $k$-anonymity, as this ultimately enables one to publish or share an anonymized version of the network data that can be used for any type of analysis.
For $k$-anonymity, one needs to choose both an \emph{anonymity measure}, and \emph{anonymization algorithm}.
The anonymity measure, which determines when two nodes are equivalent, is used to assess whether the graph satisfies $k$-anonymity.
A network is $k$-anonymous when each node is equivalent to at least $k-1$ other nodes.
The measure essentially determines the attacker scenario against which a $k$-anonymous network is protected.
At first glance, it seems desirable to choose a strict measure, as this protects against more attacker scenarios.
However, the stricter the measure, the more properties should be satisfied for two nodes to be equivalent, which means more data utility has to be sacrificed to make nodes anonymous.
However, if a too lenient measure is used, hence protecting against a weaker attacker scenario, one takes the risk of underestimating the attacker knowledge and the attacker may still be able to identify entities in the shared network and obtain sensitive information.
Usually, real-world networks do not satisfy $k$-anonymity and an anonymization algorithm should be used to perturb the graph and achieve a $k$-anonymous network which can safely be published.
This makes the $k$-anonymity approach very flexible.
Since each anonymity measure essentially models a specific level of attacker knowledge, it allows one to choose the attacker scenario to protect against accordingly.
The anonymization algorithm can be designed to preserve certain network properties considered relevant by the user, such as clustering or centrality of nodes~\cite{wang2014high}.
Utility of the resulting anonymized network can be measured based on how well structural network properties are preserved, such as the clustering coefficient, and performance on various downstream tasks such as the community detection, or finding the most central nodes. 
While both the anonymity measure and algorithm play an important role, in this paper we focus on the effect of using different anonymization measures.

It is clear that the choice of anonymity measure is of high importance.
Although many existing works aim to achieve a $k$-anonymous network for various measures~\cite{liu2008towards, hay2008resisting, romanini2021privacy, zou2009k} and various surveys have been published on network anonymization in general~\cite{ji2016graph,zhou2008brief, casas2017survey,
jiang2021applications, beigi2020survey, yazdanjue2025comprehensive}, this paper offers the first systematic comparison of measures for $k$-anonymity. 
As a key part of this paper, we give an overview of different measures, and compare them both theoretically, and empirically on a wide range of empirical networks.
We characterize the anonymity measures based on how far they `reach', i.e., up to where they take structural information into account, and how complete the structural information is and order the measures using a formalized notion of \emph{strictness}.
We empirically compare the measures based on the effectiveness of the attacker scenario considered, the anonymity vs. utility trade-off and runtime.
We find that measures with a larger reach, even with minimal structural information, perceive a higher risk of disclosure compared to measures with perfect structural knowledge of the direct neighborhood of a node.

To summarize, the main contributions of this paper are the following.
\begin{itemize}
    \item We systematically summarize and categorize existing measures for $k$-anonymity in networks (Section~\ref{sec:kanonmeas}).
    \item We theoretically compare the measures based on reach and the completeness of the structural information they take into account, and derive a formal strictness-based ordering of measures (Section~\ref{sec:theory}).
    \item We empirically compare these measures in terms of 1) measured anonymity, 2) the achieved trade-off between anonymity and utility during anonymization, and 3) runtime on a wide range of real-world network datasets (Section~\ref{sec:emp_comparing}).
\end{itemize}

The remainder of the paper is structured as follows.
In Section~\ref{sec:related}, we give an overview of methods introduced in literature to share network data in a privacy aware manner.
Then, in Section~\ref{sec:kanon}, we provide preliminary concepts and notation, followed by an overview of the considered measures for $k$-anonymity in Section~\ref{sec:kanonmeas}.
We compare a diverse subset of representative measures theoretically in Section~\ref{sec:theory}, after which we compare these measures empirically on a wide range of network datasets in terms of measured anonymity, utility and runtime in Section~\ref{sec:emp_comparing}.
Lastly, we conclude the paper and suggest possible future work in Section~\ref{sec:conc}.

\section{Related work}\label{sec:related}
In this section, we summarize approaches introduced in literature to share network data while accounting for privacy of individuals represented.
We categorize the approaches based on the type of \emph{output} they produce, being: 1) an interactive mechanism, 2) synthetic data, 3) an intermediate representation, and 4) a perturbed version of the data, which includes $k$-anonymity.
We discuss methods in each of these four categories, focusing on the type of output, which attacker scenario one protects against and how utility comes into play.
A summary of the methods, corresponding category and type of privacy preserved can be found in Table~\ref{tab:categories}. 

    \begin{table}[ht]
        \caption{Categorization of methods for privacy sensitive sharing of networks. $\checkmark$ indicates the method is in this category, $\lozenge$ that it could be in this category, but is less commonly studied as such.}
    \label{tab:categories}
    \centering
    \scriptsize
    \setlength{\tabcolsep}{2pt}
        \begin{tabular}{r|cccc|cc}
        \hline
        \multicolumn{1}{c|}{Method}           & \multicolumn{4}{c|}{Output type}                          & \multicolumn{2}{c}{Privacy for} \\
        \multicolumn{1}{l|}{}           & Interactive  & Synthetic    & Intermediate & Perturbed    & Nodes          & Edges          \\ \hline
        Differential privacy: queries~\cite{proserpio2012calibrating, hay2009boosting, macwan2018node}   & $\checkmark$ &              &              &              & $\lozenge$    & $\checkmark$  \\
        Differential privacy: synthetic~\cite{sala2011sharing, wang2013preserving, yang2020secure} &              & $\checkmark$ &              &              & $\lozenge$    & $\checkmark$  \\
        Clustering~\cite{campan2008data, bhagat2009class, liu2016linkmirage}                       &              &              & $\checkmark$ &              &   $\checkmark$            &   \\
        Injecting uncertainty~\cite{boldi2012injecting}                     &              &              &              $\checkmark$& &               & $\checkmark$  \\
        Randomization~\cite{ying2008randomizing, liu2016smartwalk, mittal2012preserving}           &              &              & &              $\checkmark$&               & $\checkmark$  \\
        $k$-Anonymity~\cite{liu2008towards, hay2008resisting, romanini2021privacy, zou2009k}                     &              &              &              & $\checkmark$ & $\checkmark$  &               \\
         \hline
        \end{tabular}
    \end{table}

\subsection{Interactive mechanism}\label{sub:int}
We first discuss interactive approaches which consists of differential privacy: queries. Approaches based on differential privacy involving synthetic data are discussed in Section~\ref{sub:synt}.
Differential privacy allows users to ask queries about the network dataset~\cite{dwork2006differential}.
The user receives an answer to which noise is added, such that it preserves either node or edge privacy.
The scenario assumed is that the attacker has knowledge about the entire network except for one node or edge.
Based on the answer, it should not be possible to infer whether a specific edge exists in the dataset, when protecting edge privacy, or whether a specific node exists, when protecting node privacy.
The amount of noise added is based on the sensitivity of the posed query and ensures privacy is guaranteed.
Most works on differential privacy focus on edge privacy~\cite{wang2013preserving}, and some on node privacy~\cite{macwan2018node, jian2021publishing}.
As the (non)existence of a node can potentially have more effect on the query than the (non)existence of an edge, it is overall more difficult to ensure node privacy.
Other works aim to answer a specific user query with as little noise as possible while still achieving the privacy guarantee~\cite{proserpio2012calibrating, hay2009boosting, macwan2018node, jiang2021applications}.

A disadvantage of this approach is that a user can only obtain answers to queries that have been implemented, which limits the possible analysis that can be done. 
Furthermore, when more queries are posed, more noise should be added to ensure differential privacy~\cite{jiang2021applications}.
This strongly reduces the quality of the answers and with that the utility of the resulting dataset.

\subsection{Synthetic data}\label{sub:synt}
The second category of approaches generates a synthetic network based on structural properties of the original network.
Various models exist to generate networks that capture certain (but not all) real-world network properties.
Examples are the Barabási–Albert model~\cite{barabasi1999emergence}, which captures the powerlaw degree distribution, or the Watts-Strogatz model~\cite{watts1998collective} which captures the small world property.
An example of such a model specifically for achieving privacy, is the dK-graph model, for which the degree of information taken into account can be set with parameter $d$~\cite{mahadevan2006systematic}.
For $d=1$, the model uses the degree distribution, and for $d=2$ the joint degree distribution, which incorporates for each edge the degree of the two nodes it connects and how frequently this combination of degrees occurs. 
However, it was found that using this model alone may not guarantee privacy as many nodes can still be identified in the resulting dK-graph~\cite{horawalavithana2019privacy}.

Work using differential privacy, generates differentially private output which is used to generate synthetic networks.
Examples are the dK-graph which uses the joint degree distribution~\cite{sala2011sharing, wang2013preserving}, or the exponential random graph model~\cite{lu2014exponential}. 
For this approach, the quality of the resulting network depends both on the quality of the answer provided by the differential privacy mechanism and the graph model used~\cite{jiang2021applications}.
Additionally, with the increasing use of differential privacy in deep learning~\cite{demelius2025recent}, recent work proposed a generative deep learning model to generate a network while ensuring edge privacy~\cite{yang2020secure}.

\subsection{Intermediate representation}
The third category we discuss outputs an intermediate representation of the original network.
This intermediate representation can be directly analyzed, or a network can be sampled from the representation.
Two types of intermediate representations have been introduced for the purpose of sharing privacy sensitive network data, being clustering~\cite{campan2008data, hay2008resisting, bhagat2009class, liu2016linkmirage,  yazdanjue2020evolutionary, ford2009p} and injecting uncertainty~\cite{boldi2012injecting}.

For the clustering approaches the nodes of the network are merged into super nodes~\cite{campan2008data, hay2008resisting, bhagat2009class, liu2016linkmirage,  yazdanjue2020evolutionary, ford2009p} according to some predefined mechanism.
These supernodes additionally contain labels denoting the number of nodes and edges within each cluster.
If there is at least one edge between nodes in two clusters, this is modeled using a superedge denoting the number of connections between nodes in the different clusters.
Various works focus on improving the clusters made~\cite{ yazdanjue2020evolutionary}, and some approaches add additional constraints to account for node labels~\cite{ford2009p}.

By ensuring that each cluster has a size of at least a preset number of nodes, anonymity of the entities can be ensured.
The clustering process can to some extent preserve global properties, yet local properties may be destroyed as the exact neighborhood structure surrounding a node can not be inferred from the supernodes and superedges.

The second approach in this category is to add uncertainty by assigning a probability to each edge in the network to hide certain node properties~\cite{boldi2012injecting}.
The resulting graph with edge probabilities can be published and used to sample new networks.
This approach would hence ensure edge anonymity.
Which network properties would be preserved and destroyed depends on how the probability is assigned to the edges.

\subsection{Perturbed network}\label{sub:perturbed}
For the last category of approaches, the goal is to share an adapted version of the original network such that privacy of entities represented is ensured. 
Overall there are two types of approaches: randomization~\cite{ying2008randomizing, liu2016smartwalk, mittal2012preserving} to ensure edge privacy, and $k$-anonymity~\cite{liu2008towards, hay2008resisting, romanini2021privacy, zou2009k} to ensure node privacy.

To protect edge privacy, operations on edges including edge addition, deletion or rewiring, can be used to perturb the network.
After the graph is perturbed, an attacker can not be certain about the existence of a connection between two entities, hence ensuring edge anonymity.
The work in~\cite{ying2008randomizing} introduces such an approach, additionally preserving utility by retaining the spectrum, i.e., the set of eigenvalues of the adjacency matrix representing the network.
In~\cite{liu2016smartwalk, mittal2012preserving}, edges are added based on random walks with the aim of preserving more data utility.

One of the most common methods for ensuring node privacy, which is also the focus of this work, is $k$-anonymity~\cite{liu2008towards, hay2008resisting, romanini2021privacy, zou2009k}.
An anonymity measure is used to assess whether a graph satisfies $k$-anonymity, which is true when for each node in the graph there are at least $k-1$ equivalent candidates given an amount of structural information indicated by the anonymity measure.
The chosen anonymity measure corresponds to an attacker scenario where we assume that a possible attacker has this amount of knowledge.
If the network is not $k$-anonymous an anonymization algorithm can be applied.
These algorithms can be designed to preserve utility, i.e., certain structural properties or performance on downstream tasks~\cite{wang2014high}.
Inevitably, some properties need to be destroyed in order to make all nodes $k$-anonymous.

In the remainder of this paper, we focus on $k$-anonymity as this approach allows one to safely publish or share an altered version of the network.

\section{Preliminaries}\label{sec:kanon}
In this section, we define the required concepts on networks and $k$-anonymity.

\subsection{Networks}\label{sub:prelnetworks}
We define a network or graph $G = (V, E)$ as a set of nodes $V$, and a set of edges $\{v, w\} \in E$, with $v, w \in V$.
The degree of a node equals the number of connections it has: $degree(v) = |\{w : \{v, w\} \in E \}|$.
We can create a distribution of node degrees in the network.
In real-world networks, this often resembles a powerlaw distribution~\cite{barabasi2004network} where degrees are distributed as $c *degree^{-\alpha}$.
Here $c$ is a constant and $\alpha$ the powerlaw exponent.
The value of $\alpha$ indicates the steepness of the slope of the degree distribution.
Hence, a lower value of $\alpha$ indicates a heavier tail and greater imbalance between low- and high-degree nodes, whereas a higher $\alpha$ results in a steeper slope, implying a more balanced degree distribution with fewer high-degree nodes.

In real-world networks it often occurs that two neighboring nodes form a triangle with a third node.
We measure the tendency to form triangles using the node \emph{clustering coefficient}.
For a node, this equals the number of triangles it is part of, divided by the maximum number of triangles it could be part of, which equals $\frac{1}{2}degree(v)(degree(v) -1)$.
For a graph, we summarize the clustering coefficient as the average clustering coefficient over all nodes.
Nodes may connect to nodes with a similar degree or a different degree.
We capture this tendency by \emph{assortativity}.
A negative value means nodes tend to connect to dissimilar nodes while a positive value indicates they tend to connect to similar nodes.

We define the distance between two nodes, $distance(v, w)$, as the minimum number of edges that needs to be traversed to reach one node from the other.
Since the edges are undirected, it follows that $distance(v, w) = distance(w, v)$ and $distance(v, v) = 0$.
When there is no path between two given nodes, i.e., a sequence of edges connecting $v$ and $w$, then $distance(v, w) = \infty$.
This occurs when the nodes are in different \emph{components}, i.e., maximal subsets of nodes in which there is a path between all pairs of nodes.
The diameter of the graph, $D(G)$, is the largest shortest path length between two nodes that is not equal to $\infty$.
The average distance in a graph equals the average of the length of all shortest paths between all node pairs in the network that are in the same component.

We define the $d$-neighborhood of a node $N_d(v) = (V_{N_d(v)}, E_{N_d(v)})$ as the set of nodes that are at most distance $d$ from the considered node $v$, and the set of all edges between these nodes.
When two neighborhoods in a graph are structurally indistinguishable they are isomorphic. More precisely, two graphs $G = (V, E)$ and $G' = (V', E')$ are isomorphic if there exists a bijective function $\phi: V \rightarrow V'$ such that for each $v, w \in V$ it holds that $\{\phi(v), \phi(w) \} \in E'$ precisely when $\{v, w\} \in E$.
We can determine if two graphs are isomorphic by comparing their \emph{canonical labeling}~\cite{nauty}.
This is a label assigned by a function $\mathcal{C}$ such that two graphs have the same label value only if they are isomorphic.
A special case of isomorphism is automorphism: an isomorphism from the graph onto itself.
If two nodes can be mapped onto each other by an automorphism, they are in the same \emph{orbit}.
This implies that the nodes are structurally indistinguishable from each other.

To assess the utility of the anonymized networks in Section~\ref{sub:utility}, we use three metrics that are commonly used in downstream network analysis tasks.
Most of the nodes in real-world networks are in the largest connected component, also referred to as the giant component.
The first metric, robustness, measures the fraction of nodes remaining in the largest connected component as the grpah is being perturbed during anonymization.
The second utility metric considers the task of community detection and therewith looks at the meso-level structure of the graph.
Nodes in networks often tend to form communities which can be found by using community detection algorithms~\cite{traag2019louvain}.
Since randomization is involved in these algorithms, more stable communities can be found by adding a consensus clustering step~\cite{lancichinetti2012consensus}.
To determine how well the community structure is preserved, we compute the normalized mutual information (NMI)~\cite{lancichinetti2012consensus} between the communities found before and after anonymization.
Third, nodes can have different roles in the network and some nodes have a more central position than others.
This can be measured by centrality measures such as betweenness centrality, which measures the fraction of shortest paths going through a given node~\cite{brandes2001faster}.
To determine how well the most central nodes are preserved, we compute the overlap in the top 100 most central nodes before and after anonymization.

\subsection{$k$-Anonymity and equivalence}
In order for a node to be $k$-anonymous, it should be equivalent to at least $k-1$ other nodes according to a particular anonymity measure $M$ (discussed in Section~\ref{sec:kanonmeas}).
Any equivalence measure $M$ partitions the set of nodes into a set of equivalence classes $P_M$.
If two nodes $v, w \in V$ are equivalent using a measure $M$, we denote this as $v \cong_M w$.
$P_M(v)$ denotes the equivalence class of a given node $v$.
A node $v$ is $k$-anonymous when $|P_M(v)|\geq k$. 
We summarize the anonymity of a graph given a certain measure by its \emph{uniqueness}, a commonly used measure~\cite{romanini2021privacy} which is determined using the formula in Equation~\ref{eq:uniqueness}.
It is essentially the inverse of anonymity and equals the fraction of nodes for which there is no other equivalent node, i.e., $k=1$.

\begin{equation}
        U_M(G) = \frac{|\{v : v\in V, |P_M(v)| = 1\}|}{|V|} \label{eq:uniqueness}
\end{equation}    

\section{$k$-Anonymity measures}\label{sec:kanonmeas}
In this section, we focus on $k$-anonymity, as previously discussed in Section~\ref{sub:perturbed}, and give an overview of measures for $k$-anonymity introduced in literature. 
We categorize these based on what type of information they take into account: 1) degree based, 2) neighborhood based, 3) automorphism based and 4) hybrid measures.
For each measure we describe both the measure itself and, even though the anonymization process itself is beyond the scope of this work, give an estimate of how complex it is to, given this measure, achieve anonymity with as few alterations as possible.
A parameterized version for most measures can be created to account for structural information beyond the direct neighborhood of the considered node, denoted by parameter $d$.
Additionally, we discuss how the measures can be extended by a cascading step. 
For completeness, we end the overview by summarizing measures that do not fit in the four categories.
Note that while various works have introduced measures for network extensions such as node labels~\cite{zhou2011k, tripathy2012algorithm, ren2022personalized, yuan2010personalized}, edge labels~\cite{hao2014k} or edge weights~\cite{liu2015k}, in the remainder of this work we focus on undirected unlabeled networks.

Table~\ref{tab:meas_overview} summarizes the measures we focus on in the remainder of this paper.
These measures are chosen based on two criteria.
First, each of the measures assigns one or more values to each node, which are used to divide the set of all nodes into equivalence classes.
Second, the chosen measures model feasible attacker scenarios.

\begin{table}[ht]
\caption{The six anonymity measures compared in this paper. For $d=1$ and $d=2$, we illustrate the reach of the measures (second and fourth column) and the outcome for the given measure (third and fifth column).
}
\label{tab:meas_overview}
\begin{tabular}{@{}rcccc@{}}
\cmidrule(l){2-5}
                        & \multicolumn{2}{c}{$d=1$}                                                                                                                                                                                                                                               & \multicolumn{2}{c}{$d=2$}                                                                                                                                                                                                                                   \\
                        & Reach                                                                                                                                      & Value                                                                                                                      & Reach                                                                                                                     & Value                                                                                                                           \\  \cmidrule(l){1-5} 
Degree ~\cite{liu2008towards, macwan2017k, casas2013algorithm, lu2012fast}                  & \begin{minipage}{.15\textwidth}      \includegraphics[width=\linewidth]{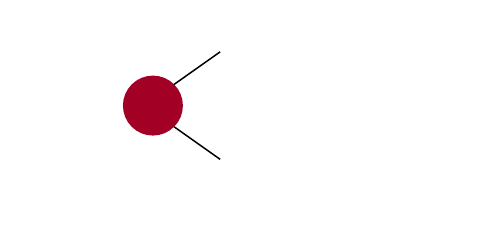}    \end{minipage}                  & 2                                                                                                                          & -                                                                                                                         & -                                                                                                                               \\ \cmidrule(l){1-5}
Count~\cite{dejong2023algorithms}                   & \begin{minipage}{.15\textwidth}      \includegraphics[width=\linewidth]{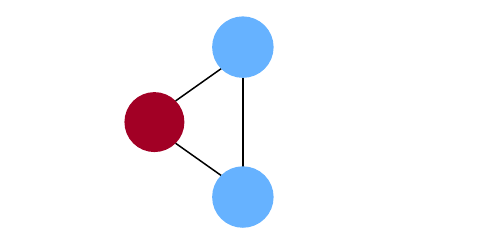}    \end{minipage}                  & (3, 3)                                                                                                                     & \begin{minipage}{.15\textwidth}      \includegraphics[width=\linewidth]{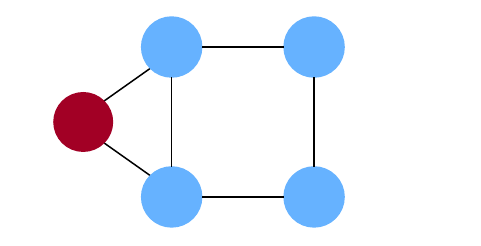}    \end{minipage} & (5, 6)                                                                                                                          \\ \cmidrule(l){1-5}
Degree distribution~\cite{dejong2023algorithms}     & \begin{minipage}{.15\textwidth}      \includegraphics[width=\linewidth]{figures/measures/table/d12.pdf}    \end{minipage}                  & $\{2, 2, 2\}$                                                                                                              & \begin{minipage}{.15\textwidth}      \includegraphics[width=\linewidth]{figures/measures/table/d21.pdf}    \end{minipage} & $\{2, 2, 2, 3, 3\}$                                                                                                             \\ \cmidrule(l){1-5}
$d$-$k$-Anonymity~\cite{romanini2021privacy, zhou2008preserving, dejong2023effect, alavi2019attacker}       & \begin{minipage}{.15\textwidth}      \includegraphics[width=\linewidth]{figures/measures/table/d12.pdf}    \end{minipage}                  & \begin{minipage}{.15\textwidth}      \includegraphics[width=\linewidth]{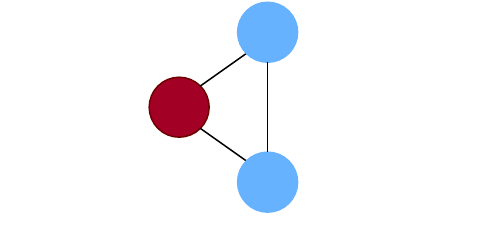}    \end{minipage} & \begin{minipage}{.15\textwidth}      \includegraphics[width=\linewidth]{figures/measures/table/d21.pdf}    \end{minipage} & \begin{minipage}{.15\textwidth}      \includegraphics[width=\linewidth]{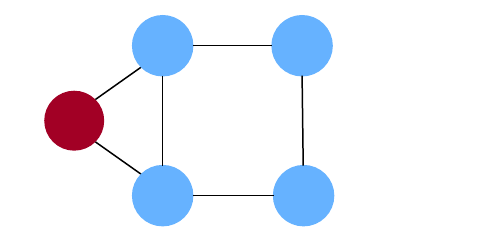}    \end{minipage}      \\ \cmidrule(l){1-5}
VRQ~\cite{hay2007anonymizing, hay2008resisting}                     & \begin{minipage}{.15\textwidth}      \includegraphics[width=\linewidth]{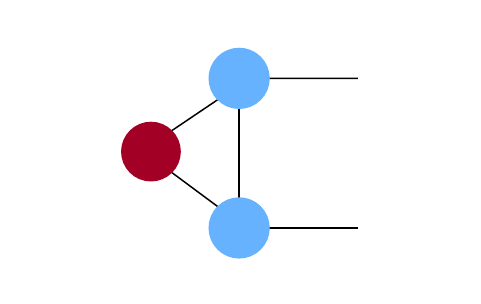}    \end{minipage}                  & $\{2, 3, 3\}$                                                                                                              & \begin{minipage}{.15\textwidth}      \includegraphics[width=\linewidth]{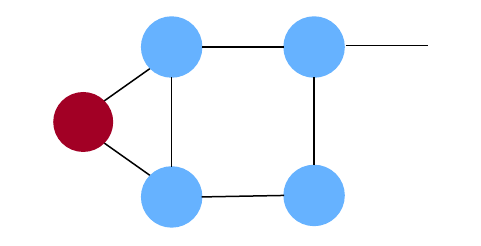}    \end{minipage} & $\{2, 2, 3, 3, 3\}$                                                                                                             \\ \cmidrule(l){1-5}
\multirow{2}{*}{Hybrid~\cite{wang2013outsourcing}} 
    & \multirow{2}{*}{\begin{minipage}{.15\textwidth}      \includegraphics[width=\linewidth]{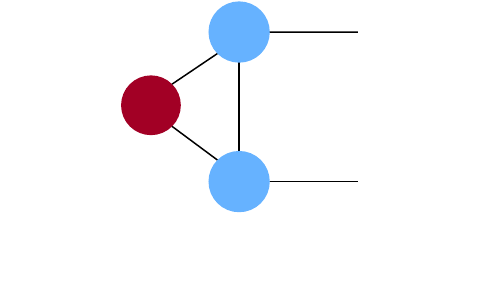}    \end{minipage}} 
    & \begin{minipage}{.15\textwidth}      \includegraphics[width=\linewidth]{figures/measures/table/d1-4.pdf}    \end{minipage} 
    & \multirow{2}{*}{\begin{minipage}{.15\textwidth}      \includegraphics[width=\linewidth]{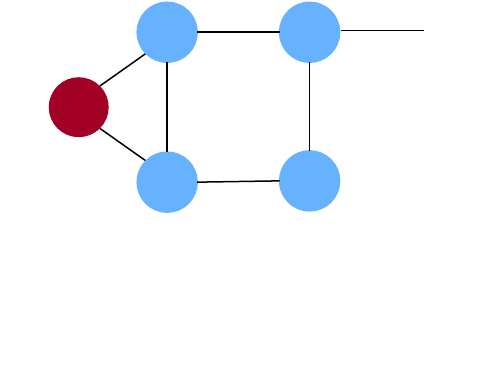}    \end{minipage}} 
    & \begin{minipage}{.15\textwidth}      \includegraphics[width=\linewidth]{figures/measures/table/d2-3.pdf}         \end{minipage} \\
                        &                                                                                                                                            & \multicolumn{1}{c}{$\{2, 3, 3\}$}                                                                                     & \multicolumn{1}{c}{}                                                                                                      & \multicolumn{1}{c}{$\{2, 2, 3, 3, 3\}$}                                                                                    \\ \cmidrule(l){1-5} 
\end{tabular}
\end{table}

\subsection{Degree based}
\label{sec:degreebased}
We distinguish between three measures within the category of degree based measures, being degree, vertex refinement queries and a measure based on joint degree.

Most of the works using degree based measures focus on the degree of the node~\cite{liu2008towards, hao2024mlda, rajabzadeh2020graph, zhang2019large, mohapatra2019graph, macwan2017k, casas2013algorithm, lu2012fast}.
A graph is then $k$-anonymous if each degree occurs at least $k$ times in the network.
Compared to the other measures we discuss, this is the simplest measure, as it accounts for the least structural information and is trivial to compute for a given node.
As a result, anonymization algorithms have been introduced that, given this measure, use the minimum number of edge deletions and additions to obtain an anonymous graph. These approaches can be used on networks with over a million nodes and edges~\cite{liu2008towards, casas2013algorithm, lu2012fast}.
Several works specifically aim to preserve certain network properties to retain utility~\cite{rajabzadeh2020graph, mohapatra2019graph, macwan2017k}.
The work in~\cite{gao2018resisting} builds upon the notion of $k$-degree anonymity and assumes there might be uncertainty in the knowledge of the attacker.
This is accounted for by a so-called binning approach in which nodes with a similar degree, which are in the same bin, are also equivalent.
Since this approach accounts for uncertainty, it can merge equivalence classes that are distinct according to the degree measure, hence it is as most as strict as this measure.

In ~\cite{hay2007anonymizing, hay2008resisting}, a parameterized measure is introduced that takes into account the degree of nearby nodes is introduced.
This work introduces the notion of \emph{vertex refinement queries} (VRQ), denoted $\mathcal{H}_i$, for which the distance can be set with parameter $i$.
When $\mathcal{H}_1$ is used, only the degree of the node is taken into account.
For $\mathcal{H}_i$, with $i>1$, the degree distribution of the nodes at distance at most $i-1$ is taken into account.
The measure is applied recursively, i.e., $\mathcal{H}_i$ is used to compute $\mathcal{H}_{i+1}$ for each $i\geq 1$.
This measure is more difficult to anonymize for and requires an approximation algorithm to minimize the number of edge deletions to arrive at an anonymous graph.

A measure that is more strict than degree, but not as strict as VRQ, has been introduced in~\cite{tai2011privacy} and accounts for degrees of all node pairs connected by an edge.
In this case, a node can be identified if at least one of its edges has a unique combination of endpoint degrees.
While this measure accounts for only slightly more information than node degree itself, anonymization is already much more difficult.
To anonymize a network given this measure, the work proposes an integer programming problem and an approximation algorithm to anonymize larger networks.
This measure differs from the remaining measures as it does not compute a value for a node based on which the equivalence classes are determined.
Hence we do not include the measure in our comparison.
However, it does show conceptual similarities to the cascading algorithm we discuss in Section~\ref{sub:anon-cascade} and~\ref{sec:casc}.

\subsection{Neighborhood based}
\label{sec:neighbbased}
For neighborhood based measures, we distinguish between measures that account for complete structural information and measures that are based on meaningful properties of the neighborhood.
The most commonly used neighborhood based measure accounts for the complete structure of the 1-neighborhood~\cite{romanini2021privacy, zhou2008preserving, zhou2011k, tripathy2012algorithm, ren2022personalized, alavi2019attacker}.
Here nodes are said to be equivalent if their neighborhoods are isomorphic, which implies the neighborhoods are not distinguishable based on network structure.
Other works propose a parameterized version of this measure assuming structural knowledge of the $d$-neighborhoods~\cite{alavi2019attacker, ren2022personalized, dejong2023algorithms, dejong2023effect}, resulting in a measure referred to as $d$-$k$-anonymity.
Similarly to \textsc{VRQ}, this measure is computed recursively.
Hence, in order for a node to be $d$-$k$-anonymous, with $d \geq 2$, it needs to be $(d-1)$-$k$-anonymous. 

The work in~\cite{dejong2023algorithms} uses heuristics to speed up the computation of $d$-$k$-anonymity.
These heuristics are graph invariants: properties that are equal if two graphs, or in this case $d$-neighborhoods, are isomorphic.
If the values are not equal, the $d$-neighborhoods can not be isomorphic and as a result the nodes can not be equivalent.
However, the converse does not hold: if the graph invariants are equal it does not imply that the neighborhoods are isomorphic.
The heuristics used are \textsc{count}, which measures the number of nodes and edges in the $d$-neighborhood and \textsc{degrees}, which we will refer to as \textsc{degdist}, which measures the degree distribution of the $d$-neighborhood.
As these heuristics in themselves can be used as measures for equivalence and are in terms of what information they measure elegantly positioned between degree and $d$-$k$-anonymity, we choose to include these in our comparison.

\subsection{Hybrid}
The previously mentioned measures can also be combined, which is done in~\cite{wang2013outsourcing}, where the aim is to protect against the so-called $1^*$-neighborhood attack.
For the corresponding measure, nodes are equivalent if they have isomorphic 1-neighborhoods and if the degree distribution over the neighboring nodes is equal.
Hence, this combines the measures of 1-neighborhood isomorphism~\cite{romanini2021privacy, zhou2008preserving, dejong2023algorithms, dejong2023effect, ren2022personalized}, and \textsc{VRQ}~\cite{hay2007anonymizing, hay2008resisting}.
A parameterized version would combine the notion of $d$-$k$-anonymity with \textsc{VRQ}$(d)$.

\subsection{Automorphism based}
Several works focus on a very strict scenario in which an attacker should not be able to distinguish between two nodes, even with complete structural information~\cite{zou2009k, wu2010k, cheng2010k}.
There are two variants. 
In~\cite{zou2009k, wu2010k}, the aim is that for each node there should be at least $k-1$ nodes in the same orbit.
This implies that the nodes have the exact same structural properties and hence can not be distinguished even when an attacker has perfect structural information of the entire network.
In~\cite{cheng2010k}, the network is $k$-anonymous if it can be partitioned into $k$ subgraphs that are isomorphic to each other.
To achieve $k$-anonymity for these measures, symmetry should be introduced to the network.
In particular, for a network to be $k$-anonymous, all its components should be symmetric in at least $k-1$ points, which is not realistic for most real-world network datasets.
Anonymizing for this measure would therefore have major impact on the utility of an anonymized version of the network.
As this approach is computationally expensive and approximated by the distance-parameterized structural neighborhood measure $d$-$k$-anonymity mentioned in Section~\ref{sec:neighbbased}, we do not include automorphism based measures in our comparison.

\subsection{Anonymity-cascade}\label{sub:anon-cascade}

Recent work introduces anonymity-cascade~\cite{dejong2023effect}.
This algorithm models the scenario where an attacker reuses the nodes that are uniquely identified to identify more nodes in the network.
The cascading algorithm starts with all nodes that have a unique $1$-neighborhood structure.
Then, it continues in a cascading fashion by identifying neighboring nodes with a unique structure.
This way, the reach of the measure is extended beyond the direct neighborhood.
This is similar to the attacker scenarios that include a propagation step~\cite{narayanan2009anonymizing, fu2015effective}. 

\subsection{Other measures}
For completeness sake, we end with approaches introduced in literature that take into account different structural information than the aforementioned measures.
We summarize four such methods. 

First, besides \textsc{vrq} as discussed in Section~\ref{sec:degreebased}, the work of Hay et al.~\cite{hay2007anonymizing, hay2008resisting} introduces two different attacker scenarios for which the first scenario is based on \emph{edge facts}.
This approach measures anonymity by counting the number of candidates given that an attacker knows a number of edges surrounding the node.
However, computing all possible subgraphs given a number of edges is very computationally expensive and would not scale to large graphs.
At the same time, if the edges are limited to the $d$-neighborhood of the graph, it can be at most as strict as $d$-$k$-anonymity mentioned at the end of Section~\ref{sec:neighbbased}.

Second, the \emph{hub fingerprints} approach assumes that the attacker knows the distance for each node to so-called \emph{hub} nodes.
These are nodes that have a central position in the network and often have a high degree.
However, experiments showed that this knowledge helps to identify only few nodes.
As this is not an effective attacker scenario, we chose to not include this measure in our comparison.

Third,~\cite{mohapatra2017level} introduces an attacker scenario where the attacker has additional structural information besides the neighborhood of a node.
This work assumes the attacker can obtain information about the centrality of nodes and explicitly aims to protect against this scenario.

Fourth and last,~\cite{dejong2023effect} introduces the notion of twin-uniqueness.
Under the assumption that the graph is shared, the attacker can identify so-called \emph{twin nodes}, which are sets of nodes that are connected to the same nodes. 
When an attacker finds that all candidates for an entity of interest are twins, they can learn all structural information there is to know about the node representing the entity, including the connections. 
The notion of \emph{twin-uniqueness} accounts for this phenomenon when computing anonymity and adds the constraint that for a node to be anonymous at least one equivalent node should not be a twin.

\section{Theoretical comparison of $k$-anonymity network measures}\label{sec:theory}
In this section, we aim to understand and compare the most prominent measures for anonymity introduced in the literature.
We do so by summarizing the measures based on the the structural information they take into account, which allows us to order them based on \emph{strictness}.
We formally define this notion in Definition~\ref{def:strictness}.

\begin{definition}[Strictness]\label{def:strictness}
    Given two measures $M_1$ and $M_2$, we say that $M_1$ is more strict than $M_2$, denoted $M_1 \geq M_2$, if for each pair of nodes $v, w$ it holds that $v \cong_{M_1} w$ implies that $\cong_{M_2} w$.
\end{definition}

This relation defines a partial order on the set of equivalence measures for nodes on  a network and has three important implications.
First, the more strict measure protects against all attacker scenarios represented by the more lenient measures.
Second, given two measures such that $M_1 \geq M_2$, for any graph $G=(V, E)$ it holds that $U_{M_1}(G) \geq U_{M_2}(G)$: the more strict measure always results in a higher or equal uniqueness, resulting in a lower anonymity.
Third, when anonymizing the network for a more strict measure, the minimum number of edge alterations, i.e., node or edge deletion and addition, required for anonymization is at least equal to the number of alterations required when anonymizing for the more lenient measure.
We formally prove this in Theorem~\ref{thm:anon} in Appendix~\ref{app:order}.

\begin{table}[t!]
\scriptsize
\caption{The $k$-anonymity measures compared in this paper (left column), their abbreviation (center column) and formula for computing their value $M(v, d)$ given an input node $v$ and distance $d$ (right column).}
\label{tab:measures}
\begin{tabular}{@{}rcl@{}}
\toprule
Measure                 & Abbreviation & $M(v, d)$                                                         \\ \midrule
Degree~\cite{liu2008towards, macwan2017k, casas2013algorithm, lu2012fast}
                  & \textsc{degree}          & $degree(v)$                                                                \\
Count~\cite{dejong2023algorithms}                   & \textsc{count}        & $(|V_{N_d(v)}|, |E_{N_d(v)} | )$                                           \\
Degree distribution~\cite{dejong2023algorithms}     & \textsc{degdist}      & $ \{degree(w) : w \in V_{N_d(v)} \}$                                       \\
$d$-$k$-Anonymity~\cite{alavi2019attacker, romanini2021privacy, zhou2008preserving, dejong2023effect}       & \textsc{$d$-$k$-anonymity}           & $\mathcal{C}(N_d(v))$\\
Vertex Refinement Query~\cite{hay2007anonymizing, hay2008resisting} & \textsc{vrq}          & $ \{degree(w) : w \in V, \ distance(v, w) \leq d\}$                                   \\
Hybrid~\cite{wang2013outsourcing}                  & \textsc{hybrid}      & $(\mathcal{C}(N_d(v)), \ \{degree(w) :  w \in V, distance(v, w) \leq d \})$             \\ \bottomrule
\end{tabular}
\end{table}

In our comparison, we account for the measures included in Table~\ref{tab:measures}, which shows for each measure its abbreviation and the equation to compute the value for a given node $v$.
All measures included, except for \textsc{degree}, are parameterized by parameter $d$. 
This value determines up to what distance from the considered node the structural information is taken into account.
For these parameterized variants, the equivalence classes can be computed recursively~\cite{dejong2023algorithms, hay2007anonymizing} by splitting each into new classes as the distance increases.
Below, we briefly summarize each of the measures used.

\begin{itemize}
    \item \textsc{degree}: the number of connections a node has.
    \item \textsc{count($d$)}: the number of nodes and edges in the $d$-neighborhood of the considered node.
    \item \textsc{degdist($d$)}: the degree distribution of the $d$-neighborhood of the considered node. Note that this is a multiset.
    \item \textsc{$d$-$k$-anonymity}: the exact structure of the $d$-neighborhood of the considered node. The structure is summarized by its canonical labeling.
    \item \textsc{vrq($d$)}: the degree of all nodes at distance at most $d$ from the considered node. Note that this is a multiset.
    \item \textsc{hybrid($d$)}: set containing the results of \textsc{$d$-$k$-anonymity} and \textsc{vrq($d$)}, i.e., the exact structure of the $d$-neighborhood of the considered node summarized by its canonical labeling and the degree of all nodes at distance at most $d$ from the considered node.  
\end{itemize}

We can use strictness to order the measures for $k$-anonymity by comparing them based on their reach, and what structural information they take into account.
First, we look at the neighborhood based measures being \textsc{degree}, \textsc{count}, \textsc{degdist} and \textsc{$d$-$k$-anonymity}.
As illustrated in the second and fourth column of Table~\ref{tab:meas_overview}, these measures all have the same reach, namely the $d$-neighborhood of the node.
However, the measures do differ in the extent to which they capture particular structural properties.
For \textsc{vrq}, the reach is larger, but the structural information is less precise.
As a result, \textsc{vrq} can not be included in the strictness ordering of neighborhood based measures.
As the \textsc{hybrid} measure combines both \textsc{$d$-$k$-anonymity} and \textsc{vrq}, this measure is more strict than both.
We formally prove that this ordering holds by showing that the stricter measures capture properties of the less strict measures.
The ordering resulting from Theorems~\ref{thm:meas_order} and~\ref{thm:meashay}, which can be found in Appendix~\ref{app:order}, is visualized in Figure~\ref{fig:ordering}.

    \begin{figure}[!t]
        \centering
        \includegraphics[width=\textwidth]{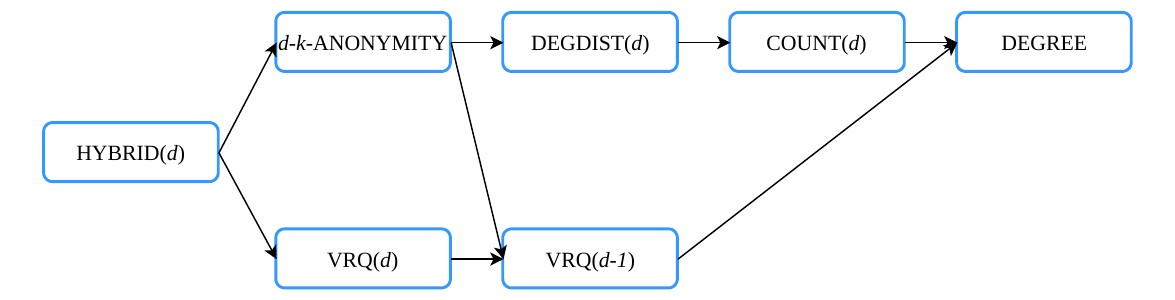}
        \caption{Ordering of anonymity measures based on strictness. An arrow between two measures ($A \rightarrow B$) implies measure $A$ is more strict than measure $B$. For readability, arrows implying strictness through transitivity are not included.
        }
        \label{fig:ordering}
    \end{figure}

\section{Empirical comparison of k-anonymity measures}\label{sec:emp_comparing}
In this section, we empirically compare the anonymity measures in Table~\ref{tab:measures} on a wide range of network datasets.
First, we summarize the experimental setup.
Second, we show how different measures affect the anonymity by comparing uniqueness.
Third, we study the effect of a cascading step on anonymity.
Fourth, we investigate how the choice in measure affects the trade-off between anonymity and data utility when anonymizing the network.
Lastly, we assess the runtime required to compute anonymity with each measure.

\begin{table}[b!]
\caption{Real-world network datasets used in the experiments. For each network, we list the number of nodes, edges, average degree, median degree, largest degree, clustering coefficient, assortativity, diameter, average distance and power law exponent $\alpha$.}
\label{tab:data}
\tiny
\setlength{\tabcolsep}{4pt}
\begin{tabular}{@{}rrrrrrrrrrr@{}}

\toprule
                Network & \multicolumn{1}{r}{$|V|$} &
                \multicolumn{1}{r}{$|E|$} 
                & \multicolumn{1}{r}{\begin{tabular}[c]{@{}r@{}}Avg. \\ degree\end{tabular}} 
                & \multicolumn{1}{r}{\begin{tabular}[c]{@{}r@{}}Median \\ degree\end{tabular}} 
                & \multicolumn{1}{r}{\begin{tabular}[c]{@{}r@{}}Max \\ degree\end{tabular}} 
                & \multicolumn{1}{r}{\begin{tabular}[c]{@{}r@{}}Clustering \\ coefficient\end{tabular}}  
                &\multicolumn{1}{r}{\begin{tabular}[c]{@{}r@{}}Assorta- \\ tivity \end{tabular}} 
                & \multicolumn{1}{r}{$D(G)$} 
                & \multicolumn{1}{r}{\begin{tabular}[c]{@{}r@{}}Avg. \\ distance \end{tabular}} 
                & \multicolumn{1}{r}{$\alpha$} \\ \midrule
 Radoslaw emails~\cite{kunegis2013konect}      & 167    & 3,250     & 38.92          & 40            & 139        & 0.69         & -0.30         & 5        & 1.97                & 4.61  \\
 Primary school~\cite{sociopatterns}           & 236    & 5,899     & 49.99        & 49            &    98        & 0.50         & 0.17          & 3        & 1.86                & 9.08  \\
 Moreno innov.~\cite{kunegis2013konect}        & 241    & 923       & 7.66           & 7             & 28         & 0.31         & -0.06         & 5        & 2.47                & 4.62  \\
 Gene fusion~\cite{kunegis2013konect}          & 291    & 279       & 1.92           & 1             & 34         & 0.00         & -0.35         & 9        & 3.90                & 2.50  \\
 Copnet calls~\cite{sapiezynski2019copenhagen} & 536    & 621       & 2.32           & 2             & 18         & 0.25         & 0.17          & 22       & 7.37                & 3.82  \\
 Copnet sms~\cite{sapiezynski2019copenhagen}   & 568    & 697       & 2.45           & 2             & 11         & 0.22         & 0.19          & 20       & 7.32                & 3.90  \\
 Copnet FB~\cite{sapiezynski2019copenhagen}    & 800    & 6,418     & 16.05          & 13            & 101        & 0.32         & 0.18          & 7        & 2.98                & 3.21  \\
 FB Reed98~\cite{networksrepository}           & 962    & 18,812    & 39.11          & 29            & 313        & 0.33         & 0.02          & 6        & 2.46                & 4.38  \\
 Arenas email~\cite{kunegis2013konect}         & 1,133  & 5,451     & 9.62           & 7             & 71         & 0.25         & 0.08          & 8        & 3.61                & 6.78  \\
 Network science~\cite{kunegis2013konect}      & 1,461  & 2,742     & 3.75           & 3             & 34         & 0.88         & 0.46          & 17       & 5.82                & 3.61  \\
 FB Simmons81~\cite{networksrepository}        & 1,518  & 32,988    & 43.46          & 37            & 300        & 0.33         & -0.06         & 7        & 2.57                & 4.74  \\
 DNC emails~\cite{kunegis2013konect}           & 1,893  & 4,385     & 4.63           & 1             & 402        & 0.59         & -0.31         & 8        & 3.37                & 2.01  \\
 Moreno health~\cite{kunegis2013konect}        & 2,539  & 10,455    & 8.24           & 8             & 27         & 0.15         & 0.25          & 10       & 4.56                & 8.24  \\
 FB Wellesley22~\cite{networksrepository}      & 2,970  & 94,899    & 63.91          & 52            & 746        & 0.27         & 0.06          & 8        & 2.59                & 4.60  \\
 Bitcoin alpha~\cite{networksrepository}       & 3,783  & 14,124    & 7.47           & 2             & 511        & 0.28         & -0.17         & 10       & 3.57                & 2.09  \\
 GRQC collab.~\cite{snapnets}                  & 5,242  & 14,484    & 5.53           & 3             & 81         & 0.69         & 0.66          & 17       & 6.05                & 2.11  \\
 FB Carnegie49~\cite{networksrepository}       & 6,637  & 249,967   & 75.33          & 54            & 840        & 0.29         & 0.12          & 8        & 2.74                & 4.98  \\
 Pajek Erdős~\cite{kunegis2013konect}          & 6,927  & 11,850    & 3.42           & 1             & 507        & 0.40         & -0.12         & 4        & 3.78                & 2.16  \\
 DT interaction~\cite{biosnapnets}             & 7,341  & 15,138    & 4.12           & 1             & 584        & 0.00         & -0.12         & 18       & 6.15                & 1.88  \\
 DG assoc.~\cite{biosnapnets}                  & 7,813  & 21,357    & 5.47           & 2             & 485        & 0.00         & -0.29         & 8        & 4.23                & 1.94  \\
 FB GWU54~\cite{networksrepository}            & 12,193 & 469,528   & 77.02          & 60            & 2,002      & 0.22         & 0.03          & 9        & 2.83                & 4.78  \\
 Anybeat~\cite{networksrepository}             & 12,645 & 49,132    & 7.77           & 2             & 4,800      & 0.40         & -0.12         & 10       & 3.17                & 1.75  \\
 CE-CX~\cite{networksrepository}               & 15,229 & 245,952   & 32.30          & 13            & 375        & 0.23         & 0.34          & 13       & 3.85                & 4.02  \\
 Astro Physics~\cite{networksrepository}       & 18,771 & 198,050   & 21.10          & 9             & 504        & 0.68         & 0.21          & 14       & 4.19                & 4.50  \\
 FB BU10~\cite{networksrepository}             & 19,700 & 637,528   & 64.72          & 51            & 1,819      & 0.20         & 0.05          & 9        & 3.03                & 5.44  \\
 FB Uillinois~\cite{networksrepository}        & 30,664 & 1,048,574 & 68.39          & 50            & 2,718      & 0.20         & 0.05          & 9        & 3.08                & 5.39  \\
 Enron email~\cite{snapnets}                   & 36,692 & 183,831   & 10.02          & 3             & 1,383      & 0.72         & -0.11         & 13       & 4.03                & 1.97  \\
 FB Penn94~\cite{networksrepository}           & 41,536 & 1,362,220 & 65.59          & 48            & 4,410      & 0.22         & 0.00          & 8        & 3.12                & 4.16  \\
 FB wall 2009~\cite{kunegis2013konect}         & 46,952 & 183,412   & 7.81           & 4             & 223        & 0.15         & 0.22          & 18       & 5.60                & 5.24  \\
 Brightkite~\cite{networksrepository}          & 58,228 & 214,078   & 7.35           & 2             & 1,134      & 0.27         & 0.01          & 18       & 4.92                & 2.48  \\
 The marker cafe~\cite{dataforgoodlab}         & 69,413 & 1,644,843 & 47.39          & 6             & 8,930      & 0.24         & -0.15         & 9        & 3.06                & 2.86  \\
 Slashdot zoo~\cite{kunegis2013konect}         & 79,116 & 467,731   & 11.82          & 2             & 2,534      & 0.09         & -0.07         & 12       & 4.04                & 3.46  \\ \bottomrule
\end{tabular}
\end{table}

\subsection{Experimental setup}\label{sec:expsetup} 
For the experiments, we use the networks in Table~\ref{tab:data}, which can each be found in their corresponding repositories.
All networks are interpreted as undirected, and additional properties such as weights or timestamps are ignored.
We assess anonymity in terms of uniqueness, i.e., the fraction of nodes that are $1$-anonymous, using the measures listed in Table~\ref{tab:measures} and a more generic version of anonymity-cascade as discussed in Section~\ref{sub:anon-cascade}.
We report on results for both $d=1$ and $d=2$, as previous work found a relatively large increase in uniqueness for $d$-$k$-anonymity with $d=2$ with only relatively small differences compared to larger values of $d$~\cite{dejong2023effect}.
To compute uniqueness, we use the same computational approach as described in~\cite{dejong2023algorithms}.
Code used can be found on github\footnote{\url{https://github.com/RacheldeJong/ANONET}}.

To assess the anonymity vs. utility trade-off, we use edge sampling~\cite{romanini2021privacy} as anonymization algorithm, which removes edges at random.
The resulting graphs are created by repeatedly deleting 1\% of the edges, resulting in 101 networks for each run including the initial network.
For each of these graphs, utility metrics as discussed in Section~\ref{sub:prelnetworks} are computed using igraph~\cite{igraph}. 
Data utility is assessed in terms of performance on three common downstream tasks in network analysis listed below:
\begin{itemize}
    \item \textbf{Robustness}: the fraction of nodes remaining in the largest connected component after anonymization.
    \item \textbf{Community structure}: the NMI~\cite{lancichinetti2012consensus} of communities found before and after anonymization. Communities are found by 10 runs of the Leiden algorithm~\cite{traag2019louvain} followed by a consensus clustering step~\cite{lancichinetti2012consensus}. 
    \item \textbf{Node centrality}: the overlap of nodes in the top 100 most central nodes according to betweenness centrality~\cite{brandes2001faster} before and after anonymization.
\end{itemize}
To account for non-determinism of the edge sampling algorithm, the anonymization process and utility computation for all 101 obtained networks are repeated five times.
Utility, uniqueness and runtime values are averaged over these five runs with a time limit of 3 hours. Results $\pm$ one standard deviation are reported.

All experiments are conducted on a machine with 1TB RAM, 64 AMD EPYC 7601 cores, and 128 threads. During the experiments, each run uses one thread that is not shared with other processes.

    \begin{figure}[t!]
        \centering
        \includegraphics[width=\textwidth]{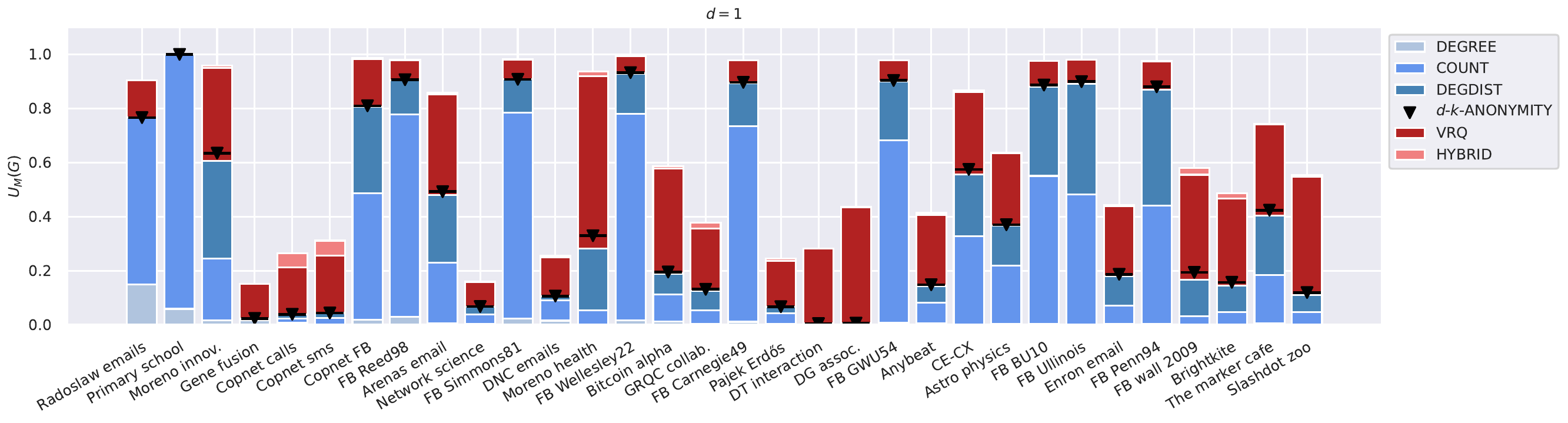}
        \includegraphics[width=\textwidth]{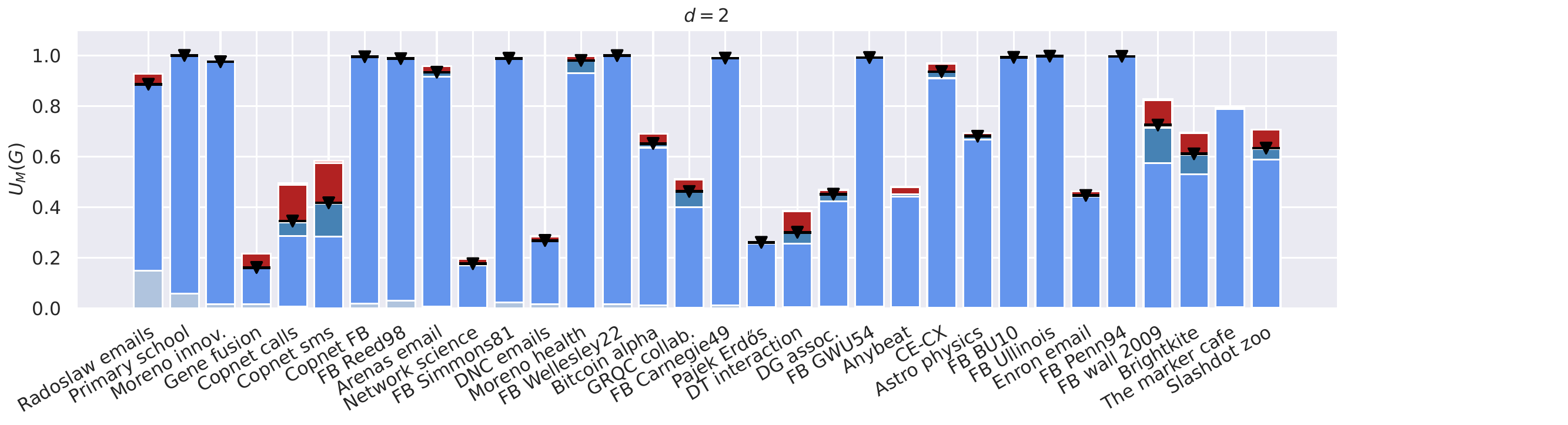}
        \caption{Uniqueness in empirical networks sorted by size from left to right. Fraction of unique nodes (vertical axis) on different datasets (horizontal axis) using different measures for $k$-anonymity: {\textsc{degree}} (\textcolor{lightsteelblue}{lightblue}), \textsc{count} (\textcolor{cornflowerblue}{blue}),  \textsc{degdist} (\textcolor{steelblue}{dark blue}), \textsc{$d$-$k$-anonymity} (black line with triangle), \textsc{vrq} (\textcolor{firebrick}{red}) and \textsc{hybrid} (\textcolor{lightcoral}{pink}).
        }
        \label{fig:measures}
    \end{figure}

\subsection{Comparing uniqueness}\label{sub:emp_comparing}
From Section~\ref{sec:theory}, we know that uniqueness, the inverse of anonymity, is expected to be higher for stricter measures and lower for less strict measures.
Figure~\ref{fig:measures} shows how the uniqueness varies in empirical networks by reporting this value for each measure at both $d=1$ and $d=2$.
In Appendix~\ref{app:diff}, we report for each measure the difference with the previous (less strict) measure in the legend of Figure~\ref{fig:measures}.
In an attempt to understand performance differences, we compare each difference with various topological network properties, of which the results can be found in Table~\ref{tab:corr_diff}. 
This table shows for each combination of difference and network property the Pearson correlation and corresponding $p$-value.
Plots of the values for all significantly correlated combinations can be found in Appendix~\ref{app:diffexplain}. 
We use these findings in the discussion of the results.

\begin{table}[b!]
\caption{Correlation between network properties (leftmost column) and the difference observed between the use of two measures (second up to last column). For each combination of network property and outcome both the correlations and $p$-value are reported. Values with $p<0.05$ and Pearson correlation larger than 0.4 or smaller than -0.4 are shown in bold.}
\label{tab:corr_diff}
\tiny
\setlength{\tabcolsep}{4pt}
\begin{tabular}{@{}rrlrlrlrlrl@{}}
\toprule
                            & \multicolumn{2}{c}{\textsc{count} vs. \textsc{degree}} & \multicolumn{2}{c}{\textsc{degdist} vs. \textsc{count}} & \multicolumn{2}{c}{\textsc{$d$-$k$-anonymity} vs. \textsc{degdist}} & \multicolumn{2}{c}{\textsc{vrq} vs. \textsc{$d$-$k$-anonymity}} & \multicolumn{2}{c}{\textsc{hybrid} vs. \textsc{vqr}} \\ \cmidrule(l){2-11} \cmidrule(l){2-11} 
Network property        & \begin{tabular}[c]{@{}r@{}}Pearson\\ correlation\end{tabular} & p-value  & \begin{tabular}[c]{@{}r@{}}Pearson\\ correlation\end{tabular} & p-value  & \begin{tabular}[c]{@{}r@{}}Pearson\\ correlation\end{tabular} & p-value  & \begin{tabular}[c]{@{}r@{}}Pearson\\ correlation\end{tabular} & p-value  & \begin{tabular}[c]{@{}r@{}}Pearson\\ correlation\end{tabular} & p-value  \\ \midrule
Nodes               & -0.216                                                        & 2.52E-01          & 0.184                                                         & 3.30E-01          & 0.223                                                         & 2.36E-01          & 0.275                                                         & 1.42E-01          & 0.002                                                         & 9.92E-01          \\
Edges               & 0.223                                                         & 2.37E-01          & \textbf{0.661}                                                & \textbf{7.11E-05} & 0.083                                                         & 6.63E-01          & -0.223                                                        & 2.37E-01          & -0.223                                                        & 2.36E-01          \\
Average degree      & 0.118                                                         & 5.33E-01          & 0.267                                                         & 1.54E-01          & -0.065                                                        & 7.33E-01          & -0.058                                                        & 7.62E-01          & -0.050                                                        & 7.94E-01          \\
Median degree       & \textbf{0.898}                                                & \textbf{1.69E-11} & \textbf{0.437}                                                & \textbf{1.58E-02} & -0.163                                                        & 3.91E-01          & \textbf{-0.651}                                               & \textbf{9.94E-05} & \textbf{-0.426}                                               & \textbf{1.91E-02} \\
Max degree          & \textbf{0.587}                                                & \textbf{6.56E-04} & 0.294                                                         & 1.14E-01          & -0.161                                                        & 3.95E-01          & \textbf{-0.416}                                               & \textbf{2.23E-02} & -0.324                                                        & 8.06E-02          \\
Alpha               & \textbf{0.536}                                                & \textbf{2.25E-03} & 0.354                                                         & 5.52E-02          & \textbf{0.419}                                                & \textbf{2.12E-02} & -0.074                                                        & 6.97E-01          & -0.046                                                        & 8.08E-01          \\
Density             & \textbf{0.528}                                                & \textbf{2.68E-03} & -0.238                                                        & 2.05E-01          & -0.186                                                        & 3.25E-01          & -0.324                                                        & 8.10E-02          & -0.196                                                        & 2.98E-01          \\
Transitivity        & 0.081                                                         & 6.70E-01          & -0.193                                                        & 3.07E-01          & -0.228                                                        & 2.25E-01          & -0.303                                                        & 1.03E-01          & -0.090                                                        & 6.35E-01          \\
Assortativity       & 0.018                                                         & 9.23E-01          & 0.194                                                         & 3.04E-01          & 0.254                                                         & 1.76E-01          & -0.005                                                        & 9.79E-01          & 0.343                                                         & 6.38E-02          \\
Diameter            & \textbf{-0.601}                                               & \textbf{4.50E-04} & -0.273                                                        & 1.45E-01          & 0.030                                                         & 8.75E-01          & 0.273                                                         & 1.44E-01          & \textbf{0.680}                                                & \textbf{3.53E-05} \\
Average distance & \textbf{-0.748}                                               & \textbf{2.01E-06} & -0.381                                                        & 3.78E-02 & 0.024                                                         & 8.98E-01          & 0.355                                                         & 5.44E-02          & \textbf{0.757}                                                & \textbf{1.27E-06} \\ \bottomrule
\end{tabular}

\end{table}

First, we focus on the results for $d=1$ shown in the top of Figure~\ref{fig:measures}.
We observe that \textsc{degree} is not very effective for identifying nodes as for all networks there is only a tiny fraction of unique nodes.
However, with \textsc{count}, which additionally accounts for the number of edges in the neighborhood, the uniqueness increases drastically for many networks, sometimes showing results similar to those of $1$-$k$-anonymity.
The difference appears to be especially larger in networks with higher degree, lower diameter or average distance, and a more balanced degree distribution, indicated by a higher alpha value (see Table~\ref{tab:data}).
Uniqueness obtained by \textsc{degdist} is in many cases equal to $1$-$k$-anonymity, and otherwise very similar.
For networks with a high alpha, the difference is larger.
This shows that \textsc{count} and \textsc{degdist}, measures that use imprecise structural information and hence represent a weaker attacker scenario, appear to be useful for approximating \textsc{$d$-$k$-anonymity}, which accounts for exact structural information.

The \textsc{vrq} measure, which could not fit in our theoretical strictness ordering of neighborhood based measures,  achieves higher uniqueness than \textsc{$1$-$k$-anonymity}, which indicates that having knowledge beyond the $1$-neighborhood of a considered node can have a strong de-anonymizing effect: it is in many cases even more effective than perfect knowledge of the $1$-neighborhood.
We find that the difference is often larger for networks with a high diameter or average distance, or lower median degree.
For the \textsc{hybrid} measure, adding exact knowledge of the $1$-neighborhood structure to \textsc{vrq}, we see for only a few networks a slight increase in uniqueness.
This implies that the sets of nodes that can be identified with the measures tend to overlap largely, and this additional knowledge of the neighborhood structure only has a small effect on uniqueness.
Hence, when choosing an attacker scenario to protect against, \textsc{vrq} may be more effective than \textsc{$d$-$k$-anonymity}.

Results for $d=2$ at the bottom of Figure~\ref{fig:measures} show a very large increase of uniqueness compared to the results for $d=1$.
Here the uniqueness values obtained by \textsc{degdist} are always the same as that of \textsc{$d$-$k$-anonymity} and that of \textsc{count} is similar to both. 
\textsc{vrq} always achieves a uniqueness larger than or equal to that of \textsc{$d$-$k$-anonymity}.
The generally small difference between \textsc{vrq} and \textsc{$d$-$k$-anonymity} with $d=2$ is consistent with the finding in~\cite{dejong2023effect} that accounting for information beyond the $2$-neighborhood does not have a large effect on uniqueness.
For the \textsc{hybrid} measure, we see that this never results in a higher uniqueness when considering the $2$-neighborhood.

Overall, these results demonstrate that having imprecise information, even as little as the number of connections of nodes at distances 1 up to $d$, can be sufficient to uniquely identify a large fraction of nodes.
Moreover, this knowledge is more revealing than complete knowledge using a smaller radius.
One may not even need complete structural information to identify many nodes, as it appears that having complete structural information often has a very limited additional effect on uniqueness.

\subsection{Anonymity-cascade}\label{sec:casc}
The results from the previous section showed that measures that reach further lead to a much higher uniqueness and hence represents more effective attacker scenarios.
In this section, we investigate the effect on uniqueness when using a different method to increase, namely by cascading the information of unique nodes to identify more nodes.
To do so, we measure uniqueness using a generic variant of anonymity-cascade~\cite{dejong2023effect}, explained in Section~\ref{sub:anon-cascade}.
The algorithm consists of two steps.
First, we identify nodes using the \emph{initial measure}.
Second, we use the \emph{cascade measure} to identify more nodes in a cascading fashion. 
This is also illustrated in Figure~\ref{fig:cascade}.
The cascading process can continue for multiple levels.
If only one level of cascading is used, we refer to this as $C_1$.
The process can then repeat until no more nodes can be identified, referred to as \emph{cascading final} $C_f$.

    \begin{figure}[t]
       \centering
       \includegraphics[width=0.35\textwidth]{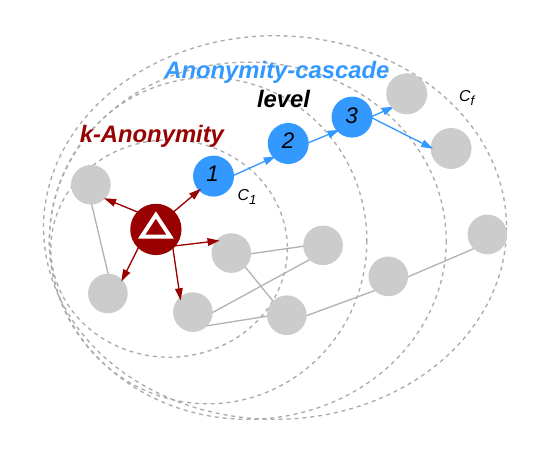}
       \caption{Illustration of anonymity-cascade. The red node marked with a triangle denotes a unique node identified with the initial measure. From this node the cascade process is started using the cascade measure. The numbers denote the level of the cascading process in which the node is identified.
       Blue nodes are identified as part of the cascading process.}
       \label{fig:cascade}
    \end{figure}

    \begin{figure}[t!]
        \centering
        \includegraphics[height=0.85\textheight]{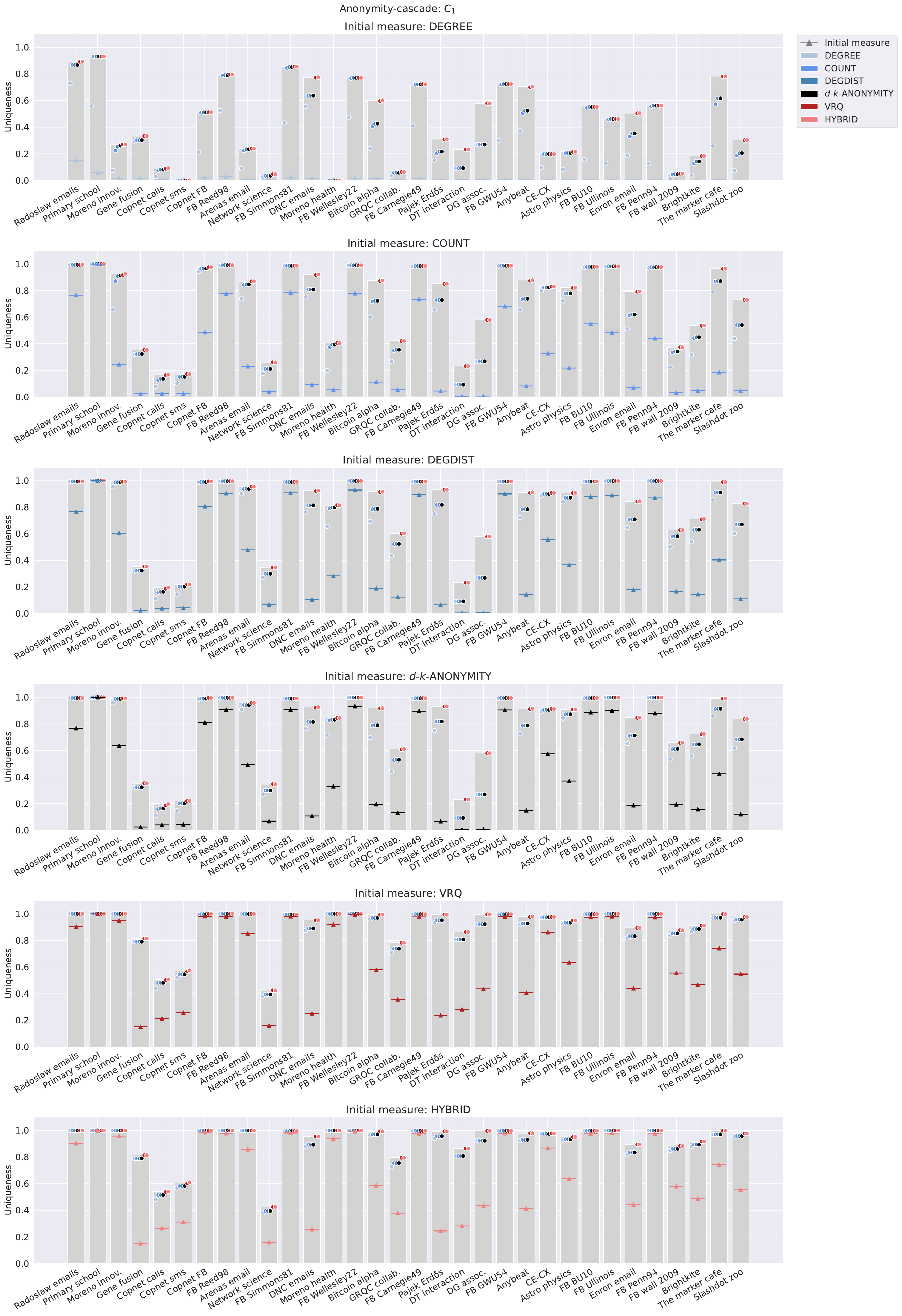}
        \caption{Uniqueness using anonymity-cascade one level ($C_1$). Each figure corresponds to a different initial measure (triangle and line) and is combined with all of the cascade measures (dots).
        The grey bar indicates the highest obtained uniqueness for the network given the initial measure. Measures used are: {\textsc{degree}} (\textcolor{lightsteelblue}{lightblue}), \textsc{count} (\textcolor{cornflowerblue}{blue}),  \textsc{degdist} (\textcolor{steelblue}{dark blue}), \textsc{$d$-$k$-anonymity} (black triangle), \textsc{vrq} (\textcolor{firebrick}{red}) and \textsc{hybrid} (\textcolor{lightcoral}{pink}).
        }
        \label{fig:rescascade}
    \end{figure}

The combinations of initial and cascade measure consist of all 36 combinations of the six measures in Table~\ref{tab:measures}.
This allows us to investigate all combinations of attacker knowledge.
Besides the cases where the levels of knowledge are equal, this also includes the cases where an attacker has little starting knowledge and more local knowledge surrounding the identified nodes, hence more cascade knowledge.
While we fully acknowledge that not every combination of measures is equally realistic, we chose to include all to observe the cascading effect of further reach on the uniqueness.
We focus on one level of cascading, as this is a more realistic scenario. (The results for cascading final can be found in Appendix~\ref{app:cascade})

Figure~\ref{fig:rescascade} shows the uniqueness for anonymity-cascade with one level of cascading.
Each subfigure shows results for a different initial measure, denoted by the line with a triangle, and results of one level of cascading, denoted by the colored dots, using each of the six measures as cascading measure.
The gray bar is added for readability of the results and shows the highest uniqueness obtained.

First, the results for \textsc{degree} show a very low initial uniqueness.
However, one level of cascading results in a much higher uniqueness, indicating that even with minimal initial knowledge, i.e., the degree of a node, many more nodes can be identified with additional cascade knowledge.
When using \textsc{degree} as cascading measure, this impact is often smaller.
In most cases using \textsc{vrq} as cascading measure has the largest impact on the uniqueness.

The observed uniqueness when using \textsc{degree} as cascading measure, still differs substantially from results for the other measures.
Results show that \textsc{count}, \textsc{degdist} and \textsc{$d$-$k$-anonymity} achieve similar results, where using \textsc{vrq} as cascade measure does increase uniqueness.
With \textsc{vrq} as initial measure we do observe both a higher initial uniqueness and uniqueness after cascading.

Overall, these results show, similar to the results of Section~\ref{sub:emp_comparing}, that by reaching further, with one or more (see Appendix~\ref{app:cascade}) cascading steps, the uniqueness increases immensely.
This effect is especially clear when using \textsc{degree} as initial measure: if the nodes with unique degrees can be identified, this can be reused to identify many other nodes. 

    \begin{figure}[t!]
       \centering
       \includegraphics[width=\textwidth]{utility/Copnetcalls.pdf}
       \includegraphics[width=\textwidth]{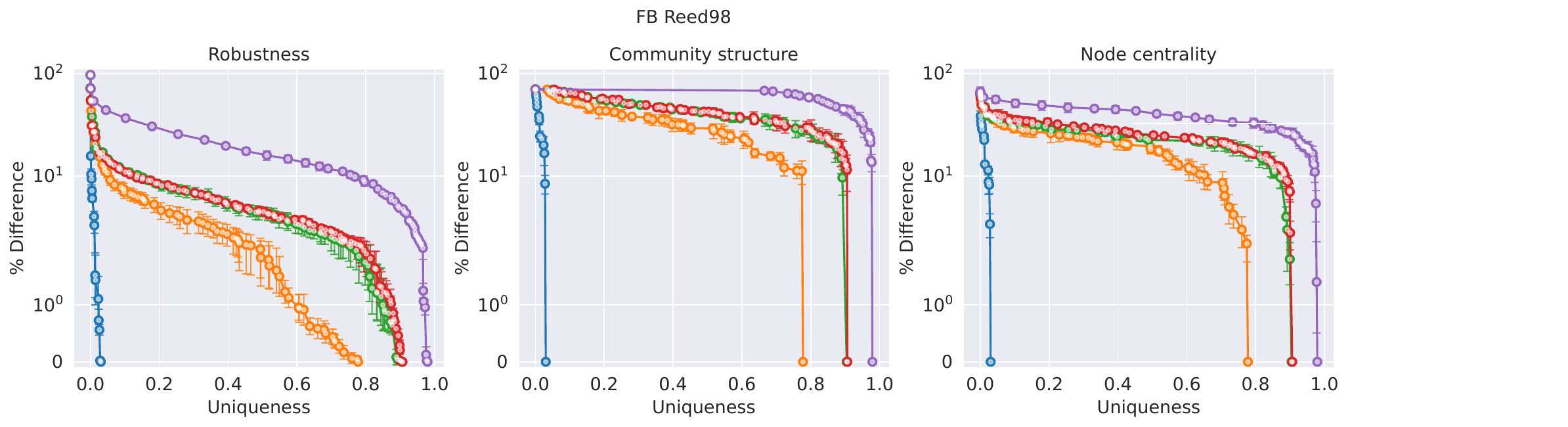}
       \caption{Pareto optimal solutions found in terms of uniqueness (horizontal axis) and performance on a downstream task (vertical axis). Each dot represents a solution found (partially anonymized network) when anonymizing using edge sampling for one of the five anonymity measures (colors). Error bars indicate results $\pm$ one standard deviation.
       }
       \label{fig:util}
    \end{figure}

\subsection{Data utility}\label{sub:utility}
In this section, we empirically investigate how the chosen measure affects the trade-off between utility and anonymity when anonymizing the networks by means of edge sampling~\cite{romanini2021privacy}.
Due to computational reasons, we perform anonymization on the 14 smallest networks.
For each partially anonymous network found during the anonymization process, we determine the uniqueness, difference in robustness, community structure, and 100 most central nodes.
To represent the trade-offs found when using different measures, Figure~\ref{fig:util} contains for two networks the Pareto fronts which represent the best trade-off values in terms of observed uniqueness and difference in performance on the three downstream tasks.
Hence, each result (dot) represents a network with a different percentage of deleted edges.
Figure~\ref{fig:util} shows the results on two networks. 
Results for the remainder of the networks, which appear to follow similar behavior, can be found
in Appendix~\ref{app:utility}.

Overall, we can see a large difference between the obtained Pareto fronts for the measures where the most lenient measures, \textsc{degree} and \textsc{count}, find better trade-offs and the measure with the largest starting uniqueness, \textsc{vrq}, finds worse trade-offs.
When using the \textsc{degdist} and \textsc{$d$-$k$-anonymity} measures, the solutions found are very similar in terms of uniqueness and utility.
Surprisingly, for the \textsc{count} measure, the trade-offs found are clearly better.
This difference is larger for networks where the starting uniqueness differs more between the \textsc{count} and \textsc{degdist} measure, such as ``FB Reed98".

Comparing the downstream tasks, we observe that for both community structure and node centrality a big decrease in utility needs to be sacrificed to improve uniqueness. 
For the largest component size, this difference is more gradual, yet steep for \textsc{vrq}.
Hence, these results show that when using a more strict measure, or one that reaches further, more utility has to be sacrificed in order to obtain the same uniqueness.

\subsection{Comparing runtimes}
In this section, we investigate the computational aspect by measuring the time required to compute uniqueness using each of the six anonymity measures.
It is worth nothing that computing anonymity consists of two parts: computing the value for a node using the chosen measure and bookkeeping operations to determine, based on the values computed, the right equivalence class for each node to arrive at the network's uniqueness value.
In Figure~\ref{fig:measures_runtime}, we show the runtime for each of the measures on the set of considered real-world networks.
We first focus on the results for $d=1$ in the top figure.

        \begin{figure}[b!]
       \centering
       \includegraphics[width=\textwidth]{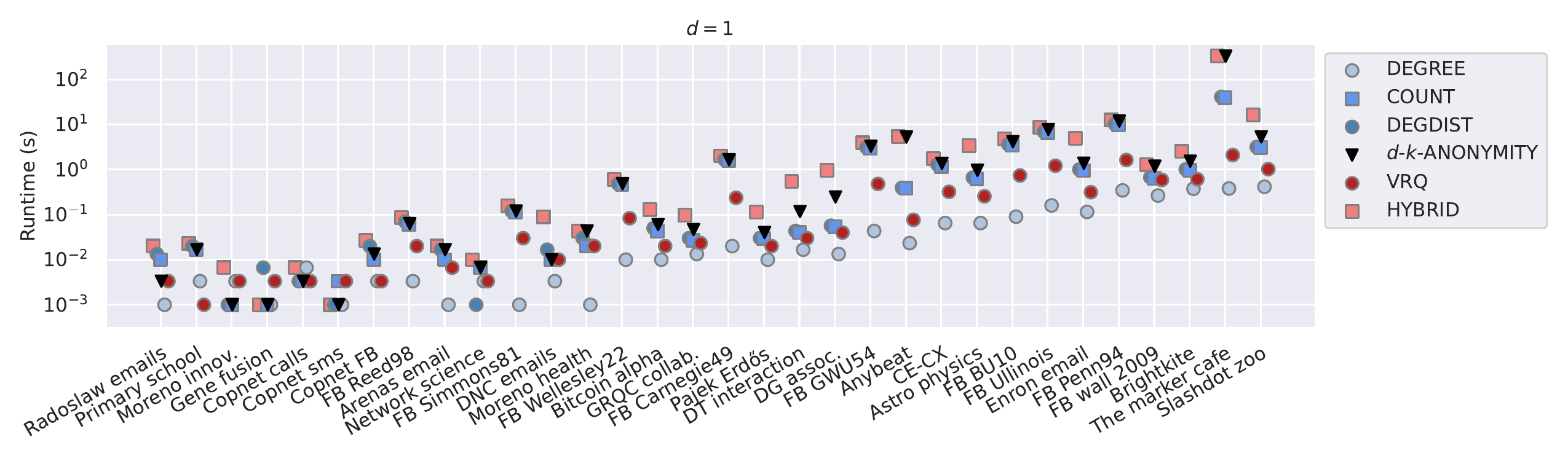}
       \includegraphics[width=\textwidth]{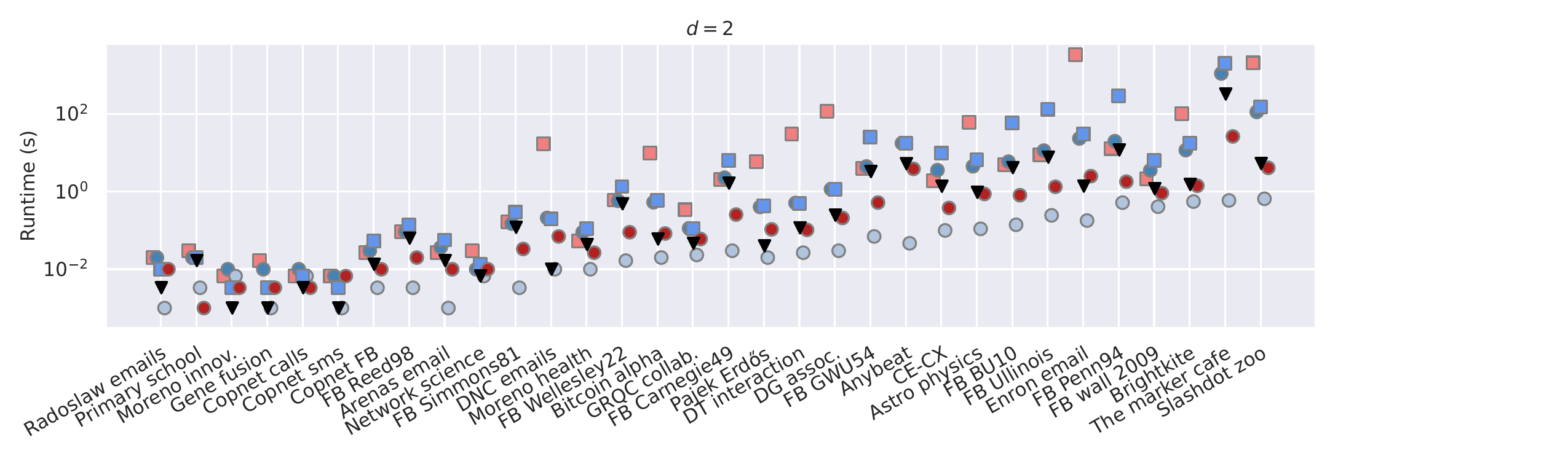}
       \caption{Runtime in seconds (vertical axis) on different network datasets (horizontal axis) for different $k$-anonymity measures: {\textsc{degree}} (\textcolor{lightsteelblue}{lightblue circle}), \textsc{count} (\textcolor{cornflowerblue}{blue square}),  \textsc{degdist} (\textcolor{steelblue}{dark blue circle}), \textsc{$d$-$k$-anonymity} (black triangle), \textsc{vrq} (\textcolor{firebrick}{red circle}) and \textsc{hybrid} (\textcolor{lightcoral}{pink square}).
       }
       \label{fig:measures_runtime}
    \end{figure}

Overall, \textsc{degree} and \textsc{vrq} are the least expensive to compute.
This can be explained by the fact that for the neighborhood based measures, \textsc{count}, \textsc{degdist} and \textsc{$d$-$k$-anonymity}, the $d$-neighborhood of the considered node needs to be extracted, which is relatively time consuming.
Instead, for \textsc{degree} merely the degree of the considered node needs to be determined, and for \textsc{vrq} one only needs to iterate over the neighboring nodes and determine their degree.
For the neighborhood based measures, computing the degree distribution as required for \textsc{degdist} or determining the canonical labeling as required for \textsc{$d$-$k$-anonymity} and \textsc{hybrid}, increases the runtime notably for some networks.
For most networks \textsc{count} is less time consuming than the other neighborhood based measures except for a few other networks, such as ``Radoslaw emails" and ``FB Wellesley22", for which more time appears to be consumed by bookkeeping operations.
As a result, \textsc{count} achieves runtimes similar to or higher than other measures, including \textsc{$d$-$k$-anonymity} and \textsc{hybrid}.

For $d=2$, shown in the bottom of Figure~\ref{fig:measures_runtime}, we see that the runtimes are more diverse, up to ~15 minutes for some of the denser networks.
For several large networks, ``The marker cafe" and ``Anybeat", the values could not be computed for \textsc{$d$-$k$-anonymity} and \textsc{hybrid} within the given time limit.
This is due to the canonical labeling computation which is in very few cases expensive to compute~\cite{dejong2023effect} and can immensely increase the runtime.
This unpredictability in runtime is not a desirable property.
For $d=2$ we again notice that for some networks the runtime is much larger for \textsc{count}.
This is likely due to the lower uniqueness achieved by this measure for $d=1$.
As a result, the value for more nodes should be determined and more bookkeeping operations are required to obtain the equivalence partition for $d=2$.
It appears that \textsc{hybrid} is often the most time consuming measure as it combines \textsc{vrq} and \textsc{$d$-$k$-anonymity}.

\section{Conclusions}\label{sec:conc}
In this paper, we have systematically compared six measures for $k$-anonymity, both theoretically and empirically.
The choice in anonymity measure strongly affects three important aspects, being, 1) the considered attacker scenario and its effectiveness, 2) the anonymity vs. utility trade-off, and 3) computation time. 

First, we provided an overview of the different measures introduced in literature resulting in a categorization into degree based, neighborhood based and automorphism based measures. 
Then, we compared the selected $k$-anonymity measures theoretically, showing how they can  be distinguished based on how far they reach, and the completeness of the structural information they take into account.
Based on this, we derived a formal distance-parameterized strictness-based ordering of the measures.
If an anonymity measure is more strict than another measure, this has three implications: 1) The more strict measure protects against all attacker scenarios the less strict measure protects against, 2) the anonymity of a network can be at most equal to that measured by the less strict measure, and 3) anonymization for the more strict measure requires at least as many graph alterations.

To better understand how the different measures impact the three aspects, we performed four  experiments on a wide range of empirical networks.
First, we measured uniqueness of the networks to assess the effectiveness of the considered attacker scenario mimicked by the anonymity measure.
Most importantly, the results showed that it is more de-anonymizing to have incomplete information that accounts for structural information that reaches further, beyond the direct neighborhood of the considered node, than complete information concerning its direct neighborhood.
Even more so, by means of the \textsc{hybrid} measure we show that often having (more) complete information only has a small additional effect on the uniqueness if the reach remains the same.
At the same time these less complete measures, even if they reach further, require less time to compute anonymity.
Hence, when choosing a measure for anonymity, it would be better to choose a measure that reaches beyond the direct neighborhood (assuming that this is a realistic attacker scenario), rather than a measure that accounts for as complete as possible structural information.

Second, we investigate the effect of extending the reach of the anonymity measure by reusing unique nodes in a cascading fashion to identify more nodes.
Just one level of cascading has an immense impact on the anonymity, especially when using \textsc{degree} as initial measure, which by itself identifies very few nodes.

Third, we investigated empirically how the chosen measure affects the trade-off between anonymity and utility by measuring the difference in performance on three downstream tasks during anonymization.
The results showed that using more strict measures, which protect against more strict attacker scenarios, results in higher observed uniqueness and substantially more utility has to be sacrificed to obtain more anonymity.
The \textsc{count} measure resulted in slightly better trade-offs than \textsc{degdist} and \textsc{$d$-$k$-anonymity}, which achieved similar results in terms of uniqueness and utility.

Lastly, we investigated the runtime required to measure anonymity with the different measures.
The most affecting factor was the reach of the measures.
As the neighborhood based measures, \textsc{count}, \textsc{degdist} and \textsc{$d$-$k$-anonymity}, need to extract the neighborhood of considered nodes, these measures overall required a longer runtime compared to \textsc{degree} and \textsc{vrq} which need solely to compute the degree of nodes.

There are still many possibilities for future work.
Most measures for anonymity so far focus on the direct neighborhood. 
At the same time our results on measuring anonymity show the necessity to look beyond the 1-neighborhood.
A crucial next step would be to design anonymization algorithms accounting for measures that look beyond the 1-neighborhood of the node.
Furthermore, the existing anonymity measures can be extended to account for additional network properties, such as node or edge labels, or even temporality.
Overall, this paper provides a foundation for future research on $k$-anonymity maximization algorithms of networks, for designing anonymity measures, and, in general, aiding in the selection of suitable existing measures in applied scenarios.

\bibliography{sn-bibliography}

\newpage
\begin{appendices}

\section{Strictness}\label{app:order}
This appendix accompanies Section~\ref{sec:theory}. Theorem~\ref{thm:anon} states a property of strictness: if a measure is more strict, anonymization through perturbation for this measure will require at least as many operations as for a less (or equally strict) measure.
Theorem~\ref{thm:meas_order} and~\ref{thm:meashay} are used to make the ordering in Figure~\ref{fig:ordering}.

For Theorem~\ref{thm:anon} we use the notion of graph alterations which is defined as operations that modify the graph, such as edge deletion and addition, and node deletion or addition.

\begin{theorem}\label{thm:anon}
    Given a graph $G=(V, E)$ and measures $M_1 \geq M_2$. To make $G$ anonymous for $M_1$, at least as many alterations of the graph are required as for $M_2$.
\end{theorem}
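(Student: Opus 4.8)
The plan is to argue by contradiction, exploiting the key fact established earlier that strictness $M_1 \geq M_2$ implies $U_{M_1}(G') \geq U_{M_2}(G')$ for \emph{every} graph $G'$, together with the fact that $k$-anonymity of a graph is equivalent to its uniqueness (and more generally its whole equivalence-class size profile) meeting the required threshold. First I would fix the target anonymity level $k$ and let $a_1$ (resp. $a_2$) denote the minimum number of graph alterations needed to transform $G$ into a graph that is $k$-anonymous under $M_1$ (resp. under $M_2$). I want to show $a_1 \geq a_2$. Suppose for contradiction that $a_1 < a_2$. Then there is a sequence of $a_1$ alterations producing a graph $G^\star$ that is $k$-anonymous under $M_1$; that is, every equivalence class of $P_{M_1}(G^\star)$ has size at least $k$, equivalently no node has $|P_{M_1}(v)| < k$.

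The crucial step is to transfer $k$-anonymity under $M_1$ to $k$-anonymity under $M_2$ \emph{on the same graph} $G^\star$. This follows from strictness applied pointwise: since $v \cong_{M_1} w$ implies $v \cong_{M_2} w$, each equivalence class of $P_{M_1}$ is contained in some equivalence class of $P_{M_2}$, i.e. $P_{M_2}$ is a coarsening of $P_{M_1}$. Hence $|P_{M_2}(v)| \geq |P_{M_1}(v)| \geq k$ for every node $v$ of $G^\star$, so $G^\star$ is $k$-anonymous under $M_2$ as well. But then $G^\star$ is reachable from $G$ using only $a_1 < a_2$ alterations and is $k$-anonymous under $M_2$, contradicting the minimality of $a_2$. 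Therefore $a_1 \geq a_2$.

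I would then note the statement in the theorem is phrased without reference to a fixed $k$, so I would either (i) state the argument for arbitrary fixed $k$ and remark it holds uniformly, or (ii) if the intended reading is ``for any anonymization objective,'' observe that the coarsening argument above shows any graph that is anonymous for $M_1$ is also anonymous for $M_2$, so the feasible set of alteration-sequences for the $M_2$-objective contains that for the $M_1$-objective, and the minimum over the larger set is no larger. The main obstacle I anticipate is not the logical core — which is the one-line observation that $P_{M_1}$ refines $P_{M_2}$ — but being careful about the formal model of ``alteration'': one must ensure that the notion of an alteration sequence and its length does not itself depend on the measure, so that the same transformation $G \to G^\star$ is admissible whichever measure we evaluate anonymity against. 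Once that modelling point is pinned down, the proof is essentially the refinement argument plus a minimality/feasible-set comparison.
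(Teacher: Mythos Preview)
Your proposal is correct and takes essentially the same approach as the paper: both rest on the observation that strictness makes $P_{M_2}$ a coarsening of $P_{M_1}$, so any graph that is $k$-anonymous for $M_1$ is automatically $k$-anonymous for $M_2$, and hence every alteration sequence achieving anonymity for $M_1$ also achieves it for $M_2$. The paper states this directly in two lines, whereas you wrap it in a contradiction argument and spell out the refinement step more carefully, but the logical content is identical.
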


\begin{proof}
	From the notion of strictness (see Definition~\ref{def:strictness} in Section~\ref{sec:theory}) it follows that if $G$ is anonymous for $M_1$, then $G$ is anonymous for $M_2$.
    This implies that if $G$ can be anonymized for $M_1$ with $X$ alterations, then it can also be anonymized for $M_2$ using the same $X$ alterations.
	  Hence, to anonymize a graph for $M_1$, one needs at least as many edge deletions as for $M_2$.
\end{proof}

\begin{theorem}\label{thm:meas_order}
    For $d \geq 1$, the following holds: 
    \begin{itemize}
        \item \textsc{degree} $\leq$ \textsc{count}$(d)$ $\leq$ \textsc{degdist}$(d)$ $\leq$ \textsc{$d$-$k$-anonymity} $\leq$ \textsc{hybrid}$(d)$
        \item \textsc{vrq}$(d)$ $\leq$ \textsc{hybrid}$(d)$
    \end{itemize}
    
    For $d \geq 2$: 
    \begin{itemize}
        \item \textsc{vrq}$(d-1)$ $\leq$ \textsc{$d$-$k$-anonymity}.
        \item \textsc{vrq}$(d-1)$ $\leq$ \textsc{vrq}$(d)$.
    \end{itemize}
\end{theorem}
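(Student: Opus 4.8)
The engine of the proof is a single observation, immediate from Definition~\ref{def:strictness}: if the value $M_2(v,d)$ of a node determines its value $M_1(v,d)$, then $M_2 \geq M_1$, since two nodes sharing an $M_2$-value then share an $M_1$-value and hence are $M_1$-equivalent. Recall also (Section~\ref{sec:theory}) that each parameterized measure in Table~\ref{tab:measures} is \emph{cumulative}: its equivalence partition at distance $d$ refines the one at distance $d'<d$, the classes being produced recursively by splitting. Consequently $M_2(d)\geq M_1(d)$ follows as soon as, for every $d'\leq d$, the level-$d'$ value of $M_2$ determines the level-$d'$ value of $M_1$ --- equality of all level-$d'$ $M_2$-values then transfers to the $M_1$-values, i.e.\ $M_2(d)$-equivalence implies $M_1(d)$-equivalence. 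So the whole statement reduces to a short list of such ``value determines value'' checks, plus transitivity of the partial order for the non-consecutive cases.

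The three links among the structural measures are routine. For \textsc{degree}$\,\leq\,$\textsc{count}$(d)$: at level $1$ we have $degree(v)=|V_{N_1(v)}|-1$, so \textsc{count}$(1)$ recovers the degree. For \textsc{count}$(d)\,\leq\,$\textsc{degdist}$(d)$: reading \textsc{degdist}$(d,v)$ as the degree sequence of the graph $N_d(v)$ (degrees taken within $N_d(v)$, which is exactly what makes \textsc{degdist} an isomorphism invariant of that neighbourhood), its length equals $|V_{N_d(v)}|$ and its sum equals $2\,|E_{N_d(v)}|$, so it pins down the \textsc{count}$(d)$-pair. For \textsc{degdist}$(d)\,\leq\,$\textsc{$d$-$k$-anonymity}: an isomorphism $N_d(v)\cong N_d(w)$ taking $v$ to $w$ preserves distances from $v$, hence restricts to $N_{d'}(v)\cong N_{d'}(w)$ for every $d'\leq d$, and isomorphic graphs have equal degree sequences; so $\mathcal{C}(N_d(v))$ determines the \textsc{degdist}$(d)$-value at every level. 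Finally \textsc{$d$-$k$-anonymity}$\,\leq\,$\textsc{hybrid}$(d)$ and \textsc{vrq}$(d)\,\leq\,$\textsc{hybrid}$(d)$ are immediate, since \textsc{hybrid}$(d,v)$ is by definition the pair $\bigl(\mathcal{C}(N_d(v)),\ \textsc{vrq}(d,v)\bigr)$ whose coordinates are precisely those two values.

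The real work is in the two relations linking \textsc{vrq} with the exact-structure measures. For \textsc{vrq}$(d-1)\,\leq\,$\textsc{$d$-$k$-anonymity} I would invoke the locality identity: if $t$ is within distance $d-1$ of $v$, then every neighbour of $t$ is within distance $d$ of $v$, hence in $N_d(v)$, so $degree(t)=degree_{N_d(v)}(t)$. The \textsc{vrq}$(d-1)$-value of $v$ is determined by the structure of $N_{d-1}(v)$ together with the $G$-degrees of its nodes (this is how the $\mathcal{H}_i$ hierarchy unrolls), and by the identity just noted these data --- hence the value --- are functions of the rooted isomorphism type of $N_d(v)$ alone; so $N_d(v)\cong N_d(w)$ forces equality of \textsc{vrq}$(d')$-values for all $d'\leq d-1$, i.e.\ \textsc{vrq}$(d-1)$-equivalence. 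For \textsc{vrq}$(d-1)\,\leq\,$\textsc{vrq}$(d)$ I would show by induction on $i$ that the partition induced by $\mathcal{H}_i$ refines the one induced by $\mathcal{H}_{i-1}$. The base $i=1$ is trivial since $\mathcal{H}_0$ is constant. For the step, assume $\mathcal{H}_i(x)=\mathcal{H}_i(y)\Rightarrow\mathcal{H}_{i-1}(x)=\mathcal{H}_{i-1}(y)$ for all $x,y$, and suppose $\mathcal{H}_{i+1}(v)=\mathcal{H}_{i+1}(w)$, i.e.\ $\{\mathcal{H}_i(u):u\sim v\}=\{\mathcal{H}_i(u'):u'\sim w\}$ as multisets; choosing a bijection $N(v)\to N(w)$ matching $\mathcal{H}_i$-values and applying the hypothesis along it matches $\mathcal{H}_{i-1}$-values too, so $\mathcal{H}_i(v)=\{\mathcal{H}_{i-1}(u):u\sim v\}=\{\mathcal{H}_{i-1}(u'):u'\sim w\}=\mathcal{H}_i(w)$, completing the induction. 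Applied at $i=d$ this says \textsc{vrq}$(d)$-equivalence implies \textsc{vrq}$(d-1)$-equivalence (and incidentally confirms that \textsc{vrq} is cumulative, as assumed above).

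I expect the \textsc{vrq}$(d-1)\,\leq\,$\textsc{$d$-$k$-anonymity} comparison to be the crux: it pits a degree-only measure of reach $d-1$ against an exact-structure measure of reach $d$, and it stands or falls on the boundary-degree identity $degree(t)=degree_{N_d(v)}(t)$ for interior $t$. Intuitively it is exactly because \textsc{$d$-$k$-anonymity} sees one hop further that a node's true degree --- and therefore its full \textsc{vrq}$(d-1)$-contribution --- is already visible inside the neighbourhood it records; drop that extra hop and the inclusion would fail, which is also the reason \textsc{vrq} cannot be slotted into the strictness chain of the radius-$d$ structural measures.
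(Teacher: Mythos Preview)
Your proof is correct and follows the same approach as the paper: each strictness relation is established by showing that the stricter measure's value determines the more lenient one's value, and the paper works through the identical list of bullet-point checks. If anything, your version is more thorough---you make the cumulativity principle explicit up front and give a full induction for \textsc{vrq}$(d{-}1)\leq\textsc{vrq}(d)$, whereas the paper simply states that this ``holds because the measure is computed recursively.''
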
 

\begin{proof}
    Given a graph $G = (V, E)$ and nodes $v, w$:
    \begin{itemize}
        \item \textsc{degree} $\leq$ \textsc{count}. Since the degree of the considered node equals the number of nodes in the $1$-neighborhood minus one, the degree can be derived from \textsc{count} at $d=1$.
        As a result, two nodes equivalent according to \textsc{count} are equivalent for \textsc{degree}, ensuring the requirement: 
        $v \cong_{\textsc{count}} w \rightarrow v \cong_{\textsc{degree}} w$.
        \item  \textsc{count} $\leq$ \textsc{degdist}. Both the number of nodes and edges in the $d$-neighborhood can be derived from \textsc{degdist}.
        The number of nodes equals the number of degree values accounted for, the number of edges equals the sum of the degrees divided by two.
        Hence, 
        $v \cong_{\textsc{degdist}} w \rightarrow v \cong_{\textsc{count}} w$.
        \item \textsc{degdist} $\leq$ \textsc{$d$-$k$-anonymity}. When two nodes are equivalent for \textsc{$d$-$k$-anonymity}, the $d$-neighborhoods need to be isomorphic.
        This implies that each node in $V_{N_d}(v)$ should be mapped onto a node in $V_{N_d}(w)$ such that the isomorphism property is satisfied.
        As a result each node should be mapped onto a node with the same degree, otherwise the isomorphism can not be valid. 
        Hence we can conclude
        $v \cong_{\textsc{$d$-$k$-anonymity}} w \rightarrow v \cong_{\textsc{degdist}} w$.
        \item \textsc{vrq} $\leq$ \textsc{hybrid} and \textsc{$d$-$k$-anonymity} $\leq$ \textsc{hybrid}. This holds because \textsc{hybrid} captures both \textsc{vrq} and \textsc{$d$-$k$-anonymity}.
        \item \textsc{vrq}(d-1) $\leq$ \textsc{$d$-$k$-anonymity}, with $d \geq 2$.
        As the $d$-neighborhood of a node contains all nodes at distance $d-1$ and all edges attached to these nodes, the degree distributions of these nodes must be equal in order for the $d$-neighborhoods to be isomorphic.
        Hence $v \cong_{\textsc{$d$-$k$-anonymity}} w \rightarrow v \cong_{\textsc{vrq}(d-1)} w$.

        \item \textsc{vrq}$(d-1)$ $\leq$  \textsc{vrq}$(d)$ with $d \geq 2$ holds because the measure is computed recursively.
    \end{itemize}
\end{proof}

It is not possible to include \textsc{vrq} elsewhere in the ordering of the neighborhood based measures.
As illustrated in Table~\ref{tab:meas_overview}, the reach of \textsc{vrq} is slightly further than the $d$-neighborhood, as it accounts for the existence of connections to nodes at distance $d+1$.
At the same time, it is less complete than other measures such as \textsc{count}.
It is, however, more strict than \textsc{degree}, as the number of degrees counted by \textsc{vrq} at distance 1 equals the degree of the node.
In Theorem~\ref{thm:meashay} we prove that we can not place this measure in the order of the neighborhood based measures when using the same value for $d$.

\begin{theorem}\label{thm:meashay}
     \textsc{count}$(d)$ $\nleq$ \textsc{vrq}$(d)$ ,  \textsc{vrq}$(d)$ $\nleq$ \textsc{count}$(d)$  and   \textsc{vrq}$(d)$ $\nleq$ \textsc{count}$(d+1)$.
\end{theorem}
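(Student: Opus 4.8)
The relation $M_1 \le M_2$ unfolds as: for every graph and every vertex pair, $v \cong_{M_1} w$ implies $v \cong_{M_2} w$. Hence to prove each of the three non-inclusions it suffices to display one graph and one pair of vertices $v,w$ that are equivalent under the first-named measure but not under the second. The plan is to give three explicit families of counterexamples, one per claim, built only from paths and cycles, so that every verification reduces to counting vertices, edges, and degrees inside a bounded neighborhood.

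For \textsc{count}$(d)\nleq\textsc{vrq}(d)$, the idea is to fix $(|V_{N_d}|,|E_{N_d}|)$ while changing a single degree just beyond the reach of \textsc{vrq}. Let $v$ be the midpoint of a path on $2d+1$ vertices, and let $w=x_d$ in the graph consisting of a path $x_0x_1\cdots x_{2d}$ together with one extra pendant vertex $y$ attached to $x_{2d}$; let $G$ be the disjoint union of the two components (a long connecting path can be inserted if a connected example is preferred). Since $y$ is at distance $d+1$ from $w$, it does not belong to $N_d(w)$, so both $d$-neighborhoods have $2d+1$ vertices and $2d$ edges and $v\cong_{\textsc{count}(d)}w$; but $x_{2d}$ has degree $2$ instead of $1$, so the degree multisets within distance $d$ are $\{1,1,2^{2d-1}\}$ versus $\{1,2^{2d}\}$, giving $v\not\cong_{\textsc{vrq}(d)}w$. (The two \textsc{vrq} values also diverge under the recursive reading of the measure --- at level $d+1$ in general, already at level $2$ when $d=1$.)

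For the other two claims the plan is to neutralize \textsc{vrq} entirely by working with $2$-regular graphs, and then let \textsc{count} distinguish a neighborhood that closes into a cycle from one that stays a path. For \textsc{vrq}$(d)\nleq\textsc{count}(d)$, take $w$ a vertex of $C_{2d+1}$ and $v$ a vertex of $C_{2d+2}$: all degrees equal $2$ and exactly $2i+1$ vertices lie within distance $i$ of each of $v,w$ for every $i\le d$, so all \textsc{vrq} values coincide; yet $N_d(w)=C_{2d+1}$ has $2d+1$ edges while $N_d(v)$ is a path on $2d+1$ vertices with $2d$ edges, so the \textsc{count}$(d)$ values differ. For \textsc{vrq}$(d)\nleq\textsc{count}(d+1)$ the same construction one size larger works: put $w$ in $C_{2d+3}$ and $v$ in $C_{2d+4}$, so \textsc{vrq}$(d)$ still cannot tell them apart (both $d$-neighborhoods are degree-$2$ paths on $2d+1$ vertices), whereas the $(d+1)$-neighborhoods are $C_{2d+3}$, with $2d+3$ edges, versus a path on $2d+3$ vertices, with $2d+2$ edges, so \textsc{count}$(d+1)$ separates them.

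The routine ingredients are the eccentricity facts for cycles --- a vertex of $C_{2m+1}$ reaches everything within distance $m$, a vertex of $C_{2m+2}$ reaches everything except its unique antipode, and the vertices within distance $m$ of the latter induce a path --- which are immediate. The one conceptual point, and the reason the third claim is not obvious, is that one might expect \textsc{count} with the larger parameter $d+1$ to see at least as much as \textsc{vrq}$(d)$; the resolution is that the two measures record incomparable information --- bare degrees versus a raw edge count --- so regular graphs wipe out everything \textsc{vrq} registers while leaving \textsc{count} sensitive to whether the boundary of the neighborhood is closed. This is the step I would write up most carefully, together with the check that the chosen cycle lengths are large enough that neither neighborhood wraps around and creates spurious adjacencies.
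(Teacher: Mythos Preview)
Your overall strategy---explicit counterexamples built from paths and cycles---is exactly what the paper does. However, you have reversed the direction of the strictness order. By Definition~\ref{def:strictness}, $M_1 \geq M_2$ means $v\cong_{M_1}w \Rightarrow v\cong_{M_2}w$; hence $M_1 \leq M_2$ means $v\cong_{M_2}w \Rightarrow v\cong_{M_1}w$. To witness $M_1 \nleq M_2$ you therefore need a pair $v,w$ that is equivalent under the \emph{second}-named measure $M_2$ but not under the first. Your opening sentence states the opposite, and all three constructions follow the wrong template.

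For the first two claims this is harmless by symmetry: your path-plus-pendant example, intended for \textsc{count}$(d)\nleq\textsc{vrq}(d)$, in fact proves \textsc{vrq}$(d)\nleq\textsc{count}(d)$, and your cycle pair $C_{2d+1}$ versus $C_{2d+2}$, intended for the latter, in fact proves the former. The two counterexamples simply trade places, and together they do establish both incomparabilities.

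For the third claim the reversal is fatal. To show \textsc{vrq}$(d)\nleq\textsc{count}(d+1)$ you must exhibit $v,w$ with $v\cong_{\textsc{count}(d+1)}w$ but $v\not\cong_{\textsc{vrq}(d)}w$: \textsc{count}$(d{+}1)$ must fail to separate the pair while \textsc{vrq}$(d)$ succeeds. Your pair $C_{2d+3}$ versus $C_{2d+4}$ does precisely the opposite---the graphs are $2$-regular, so \textsc{vrq}$(d)$ cannot tell them apart, whereas \textsc{count}$(d{+}1)$ does. That establishes \textsc{count}$(d{+}1)\nleq\textsc{vrq}(d)$, which is not the stated claim. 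What is needed instead is a construction in the spirit of your first example: two nodes whose $(d{+}1)$-neighborhoods share the same vertex and edge counts, yet some vertex within distance $d$ carries a different global degree. This is the type of example the paper supplies in its figure for this case.
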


    \begin{figure}[h]
        \centering
        \includegraphics[width=0.9\textwidth]{measures/proof}
        \caption{Counterexamples accompanying the proof for Theorem~\ref{thm:meashay}. This figure shows three examples: 1) an example where \textsc{vrq} is able to distinguish between two nodes and \textsc{count} is not (left), 2) one where \textsc{count} is able to distinguish between two nodes but \textsc{vrq} is not (middle) and  3) an example where \textsc{vrq} is able to distinguish between two nodes but \textsc{count($d$+1)} is not (right).
        }
        \label{fig:proof_meashay}
    \end{figure}

\begin{proof}\label{proof:meashay}
    We prove this theorem by use of the counterexamples in Figure~\ref{fig:proof_meashay}. 
    The figure shows one case for which \textsc{count} is able to distinguish between two nodes and \textsc{vrq} is not, an example showing the converse where \textsc{vrq} is able to distinguish between the nodes and \textsc{count} is not and an example where \textsc{vrq} is able to distinguish between two nodes and \textsc{count}$(d)$ and $d+1$ and is not.
    Hence, we can not conclude that one measure is more strict than the other.
\end{proof}

\section{Empirical differences between measures}\label{app:diff}
This appendix accompanies Section~\ref{sec:emp_comparing}, where we study the uniqueness obtained by various measures for $k$-anonymity.
Figure~\ref{fig:diff} denotes the difference in uniqueness obtained from various measures for $d=1$ and $d=2$. 
The biggest difference for all networks and distances is between \textsc{count} and \textsc{degree}, and after that \textsc{vrq} and \textsc{$d$-$k$-anonymity}.
For the other measures, the difference is often very small.

    \begin{figure}[ht]
        \centering
        \includegraphics[width=\textwidth]{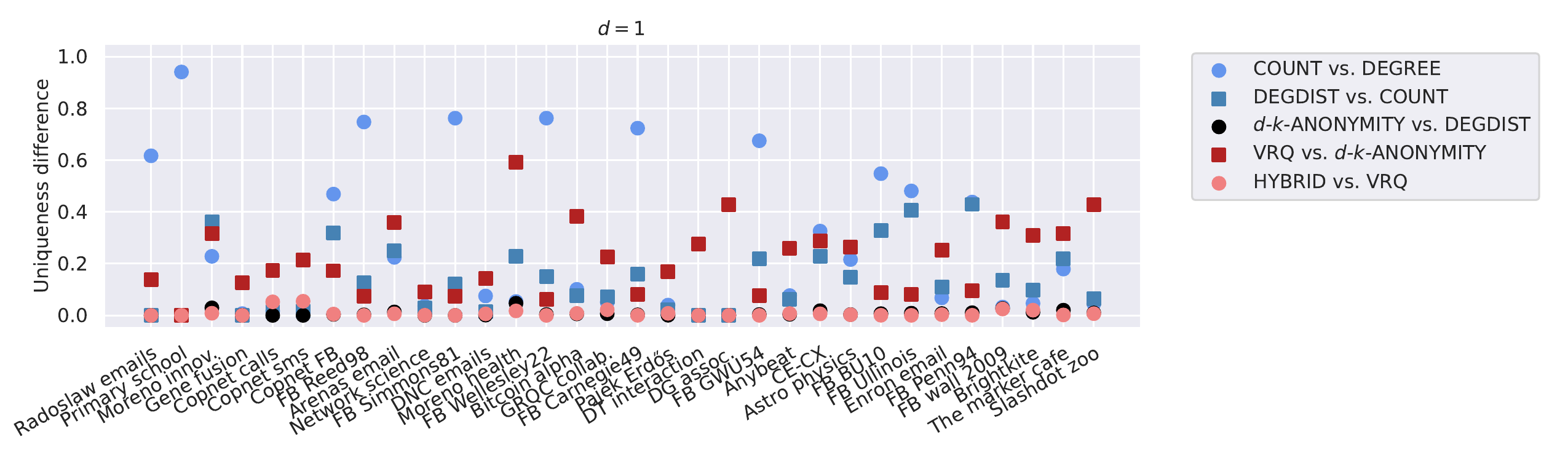}
        \includegraphics[width=\textwidth]{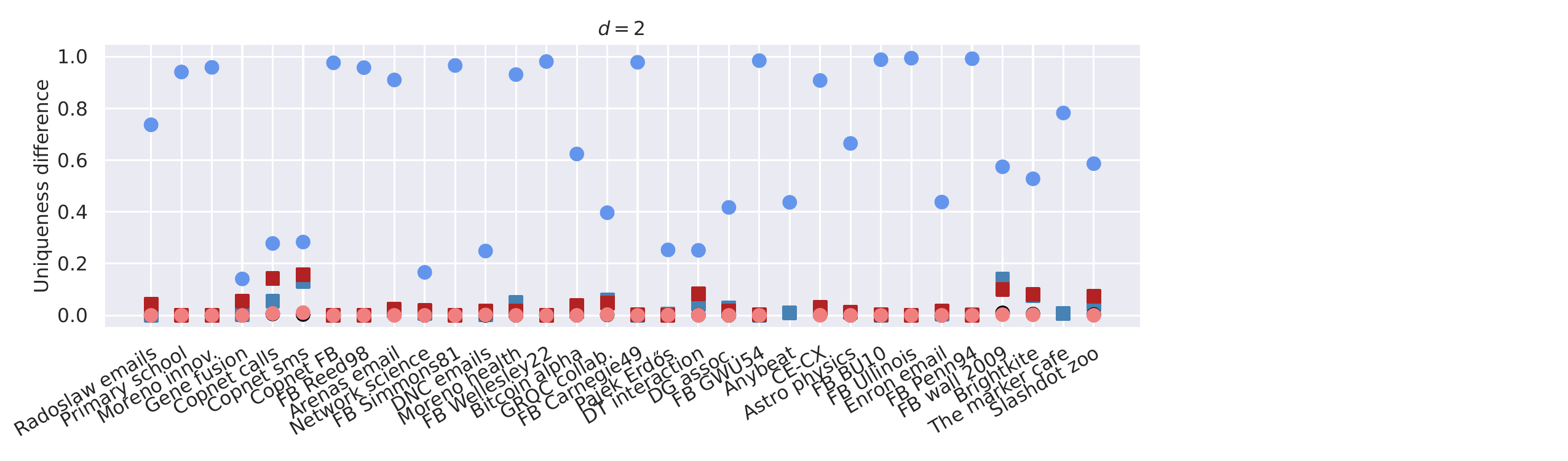}
        \caption{Difference in fraction of unique nodes (vertical axis) on different datasets (horizontal axis) using different measures for $k$-anonymity: \textsc{count} vs. \textsc{degree} (\textcolor{cornflowerblue}{blue circle}),  \textsc{degdist} vs. \textsc{count} (\textcolor{steelblue}{dark blue square}), \textsc{$d$-$k$-anonymity} vs. \textsc{degdist} (black circle), \textsc{vrq} (\textcolor{firebrick}{red square}) and \textsc{hybrid} vs. \textsc{vrq} (\textcolor{lightcoral}{pink circle}).
        }
        \label{fig:diff}
    \end{figure}

\newpage

\section{Empirical differences and network properties}\label{app:diffexplain}
This appendix accompanies Section~\ref{sec:emp_comparing} in which we study the uniqueness obtained by various measures for $k$-anonymity.
To understand which network properties cause the difference found between measures, we plot the differences reported in Figure~\ref{fig:diff} against various network properties.
The resulting can be found in Figure~\ref{fig:corr_diff} for the combinations deemed significant, as indicated in Table~\ref{tab:corr_diff}.

    \begin{figure}[ht]
           \centering
           \includegraphics[width=\textwidth]{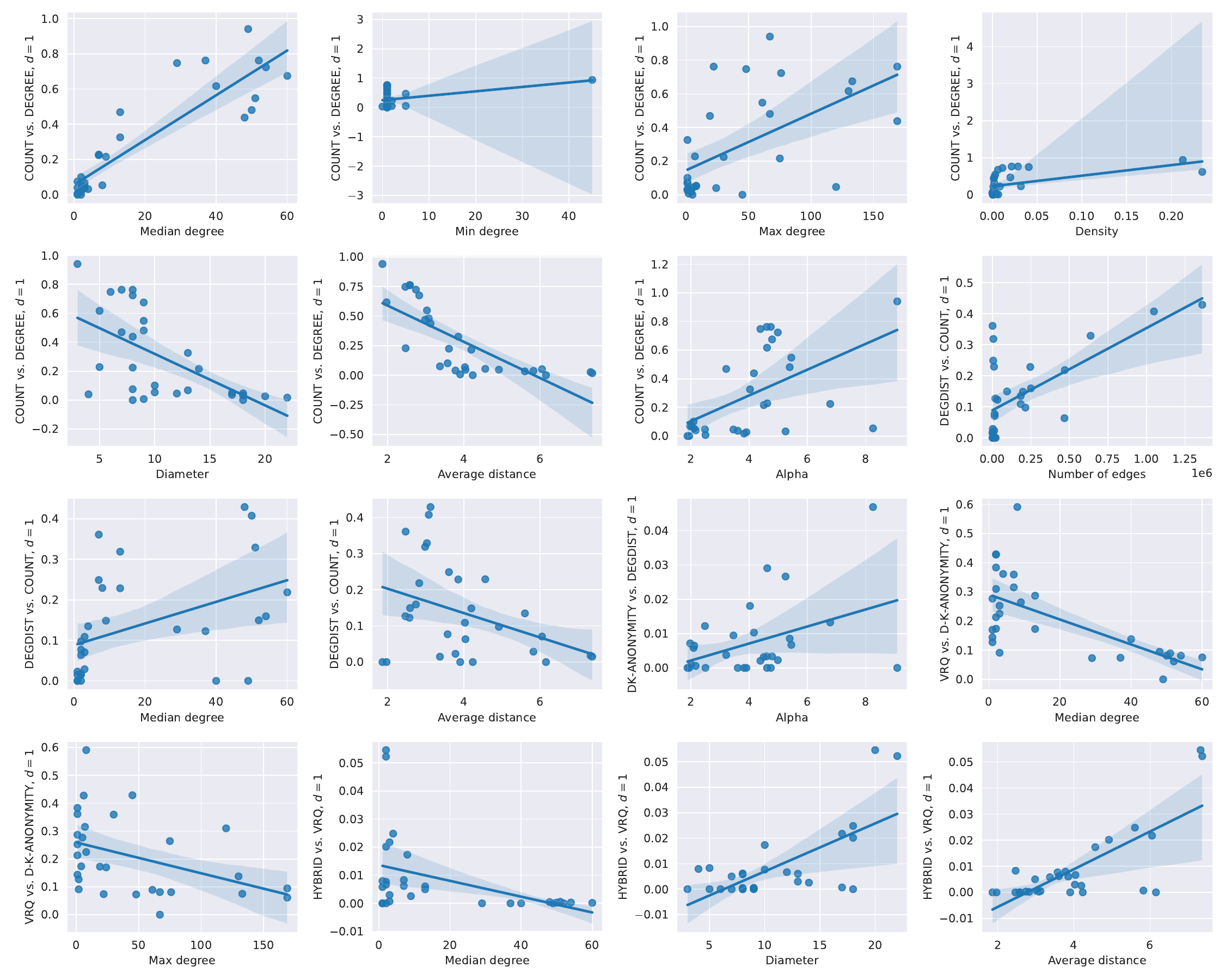}
           \caption{Uniqueness differences and network properties. Each subfigure shows a specific combination of network property (horizontal axis) and difference between measures (vertical axis). Included combinations are significantly correlated as indicated in Table~\ref{tab:corr_diff}.}
           \label{fig:corr_diff}
    \end{figure}

\newpage
\section{Anonymity-cascade}\label{app:cascade}
Figure~\ref{fig:rescascade_final} presents results accompanying Section~\ref{sec:casc}, showing the uniqueness for anonymity-cascade final.

    \begin{figure}[h]
        \centering
        \includegraphics[height=0.85\textheight]{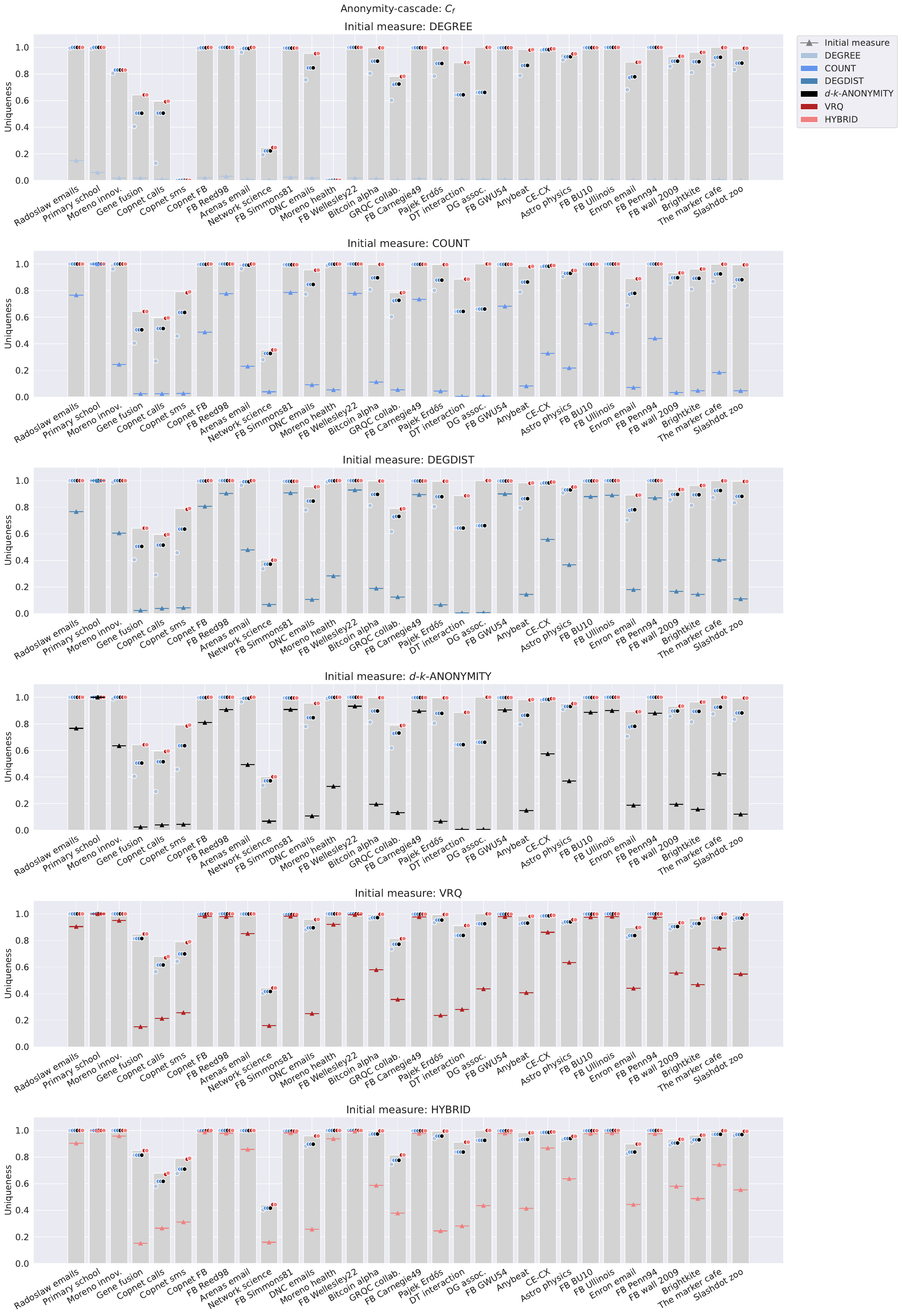}
        \caption{Uniqueness using anonymity-cascade final. Each figure corresponds to a different initial measure (triangle and line) and is combined with all of the cascade measures (dots).
        The grey bar indicates the highest obtained uniqueness for the network given the initial measure. Measures used are: {\textsc{degree}} (\textcolor{lightsteelblue}{lightblue}), \textsc{count} (\textcolor{cornflowerblue}{blue}),  \textsc{degdist} (\textcolor{steelblue}{dark blue}), \textsc{$d$-$k$-anonymity} (black), \textsc{vrq} (\textcolor{firebrick}{red}) and \textsc{hybrid} (\textcolor{lightcoral}{pink}).
        }
        \label{fig:rescascade_final}
    \end{figure}

\newpage
\clearpage
\section{Measures and utility}\label{app:utility}
This appendix accompanies Section~\ref{sub:utility} of the main paper containing the Pareto fronts found for 14 networks.

\begin{figure}[htbp]
\begin{center}       
    {\includegraphics[width=\textwidth]{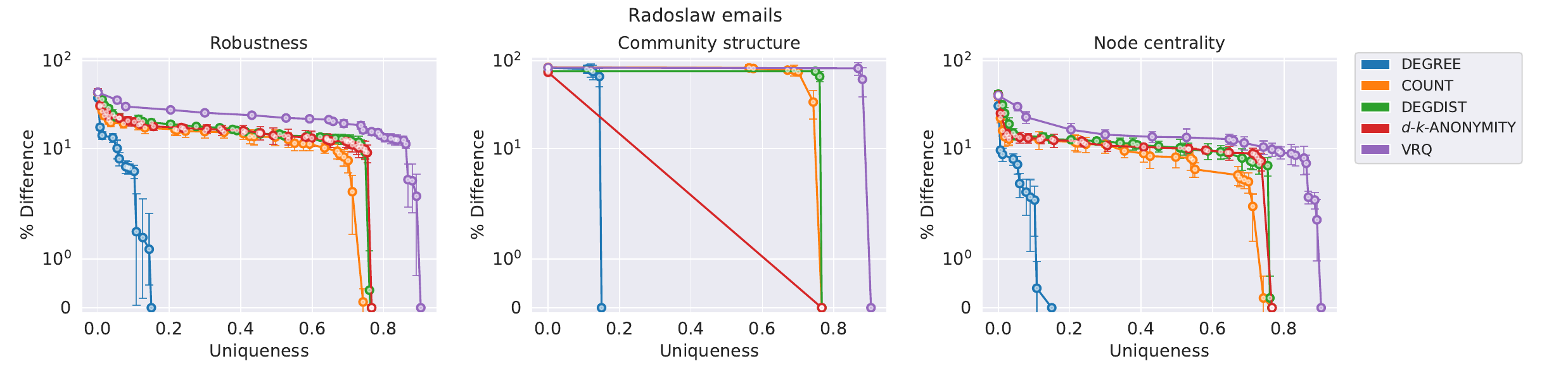}}
    {\includegraphics[width=\textwidth]{./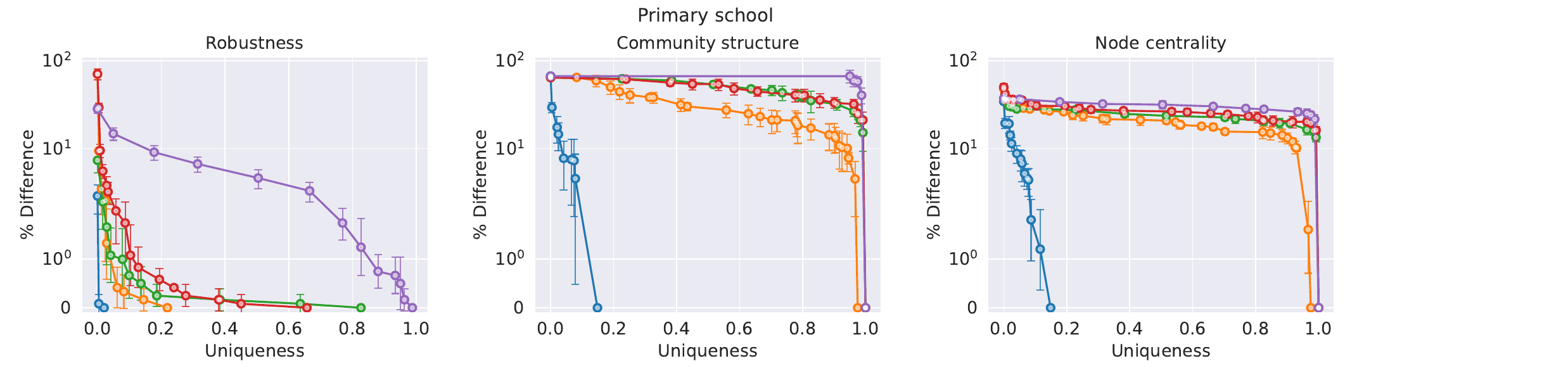}}
    {\includegraphics[width=\textwidth]{./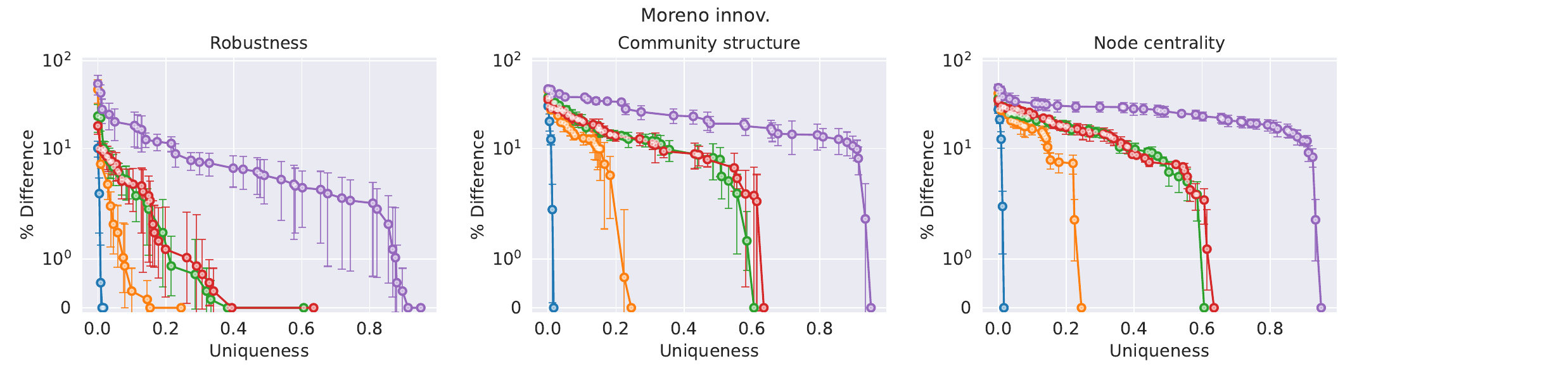}}
    {\includegraphics[width=\textwidth]{./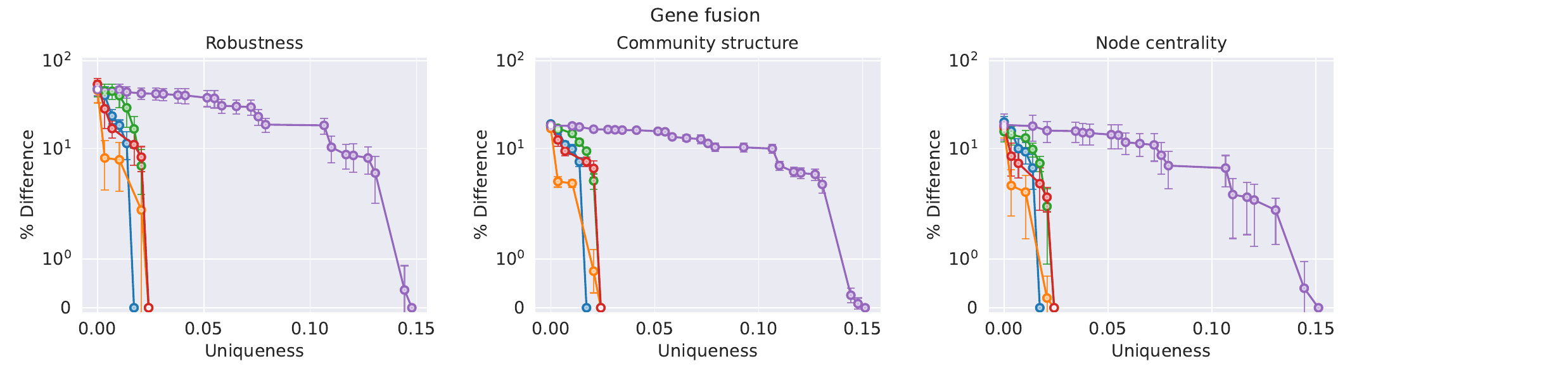}}
\end{center}
\end{figure}
\begin{figure}[htbp]
\begin{center}  
    {\includegraphics[width=\textwidth]{./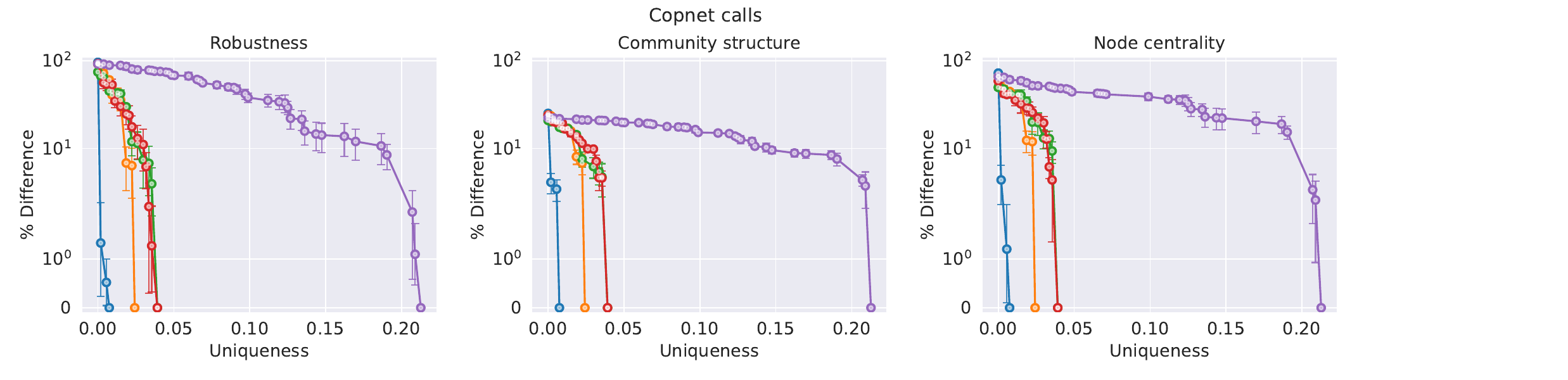}}
    {\includegraphics[width=\textwidth]{./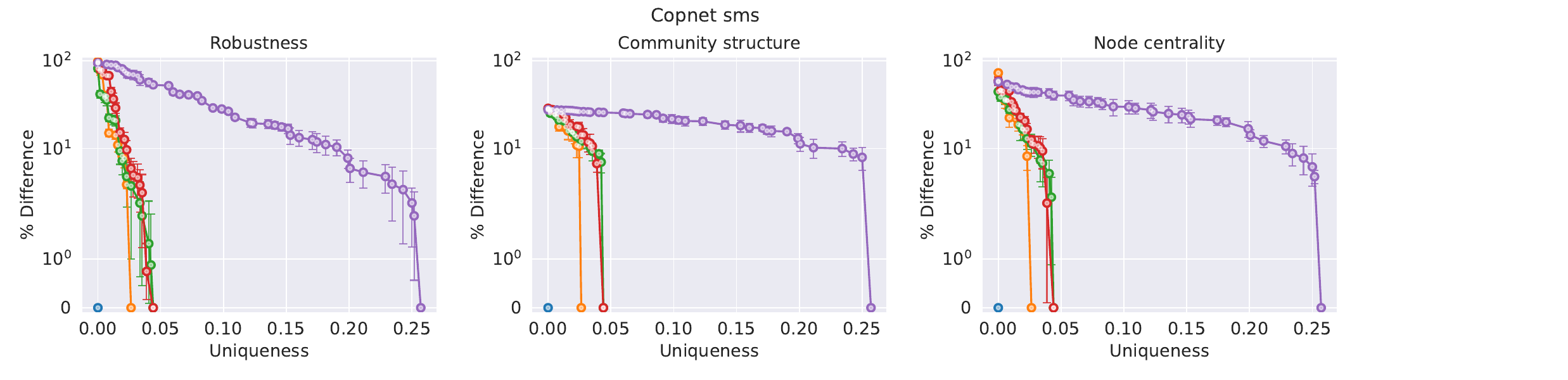}}
    {\includegraphics[width=\textwidth]{./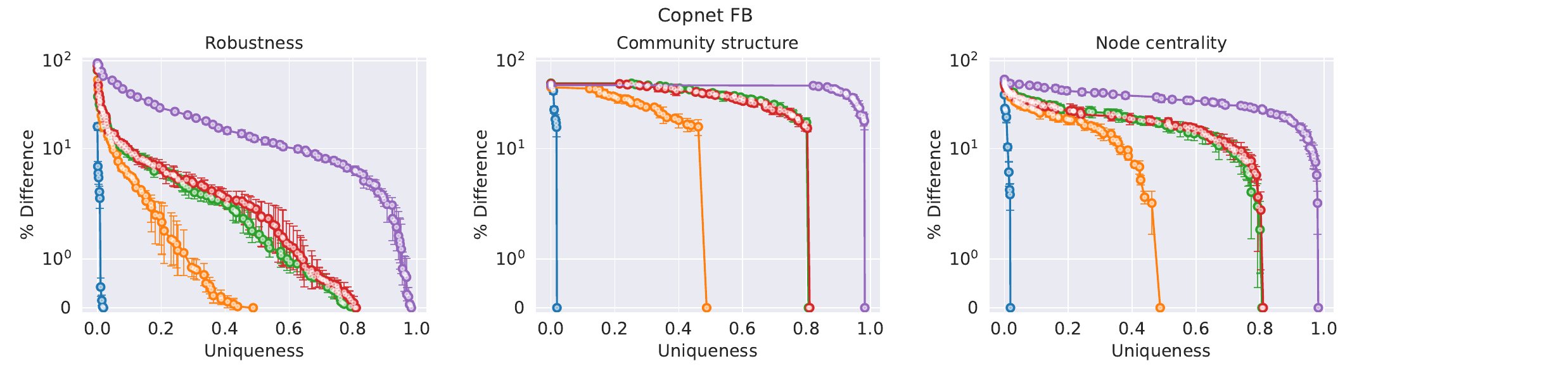}}
    {\includegraphics[width=\textwidth]{./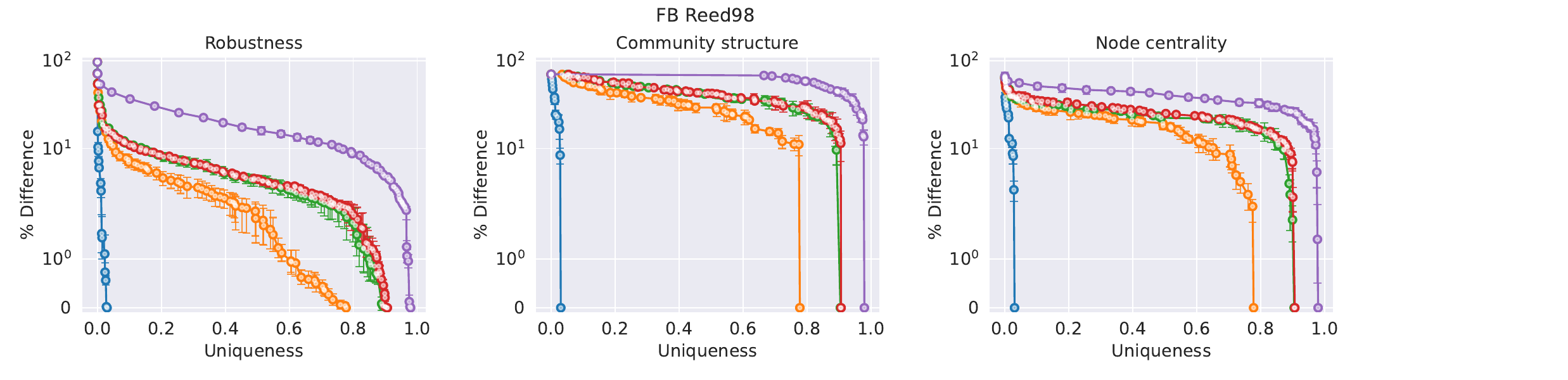}}
    {\includegraphics[width=\textwidth]{./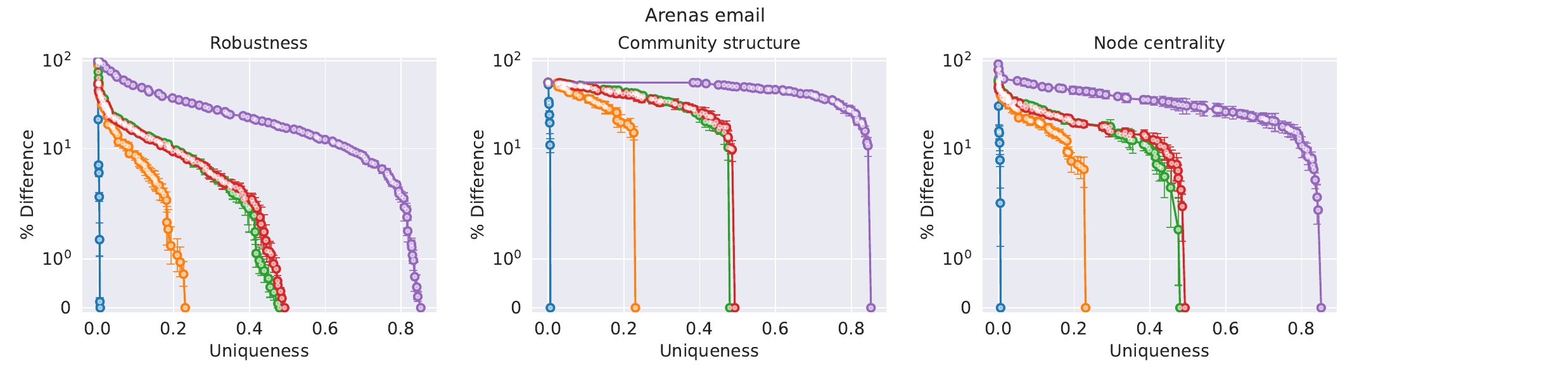}}
    {\includegraphics[width=\textwidth]{./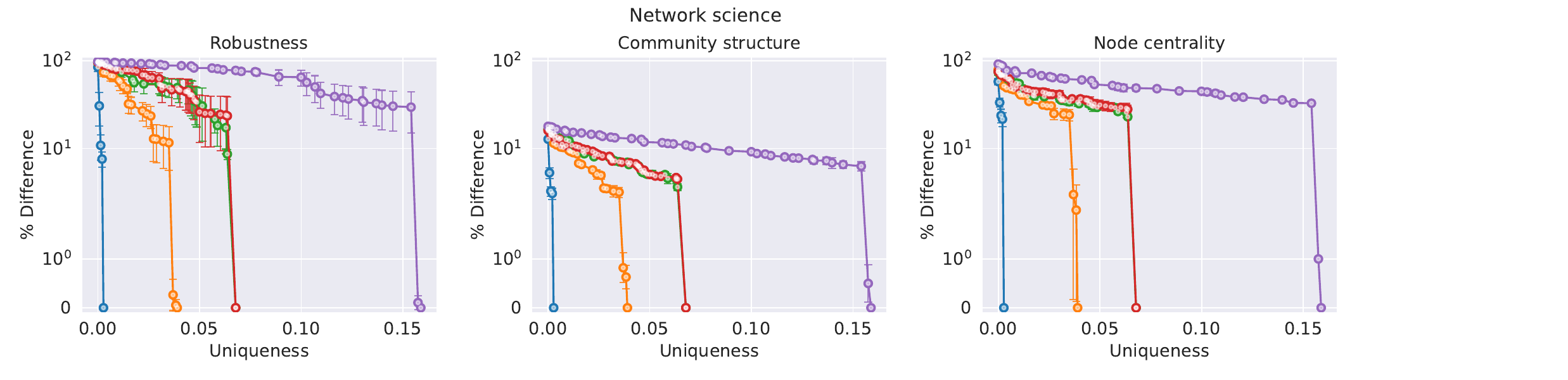}}
\end{center}
\end{figure}
\begin{figure}[htbp]
\begin{center} 
    {\includegraphics[width=\textwidth]{./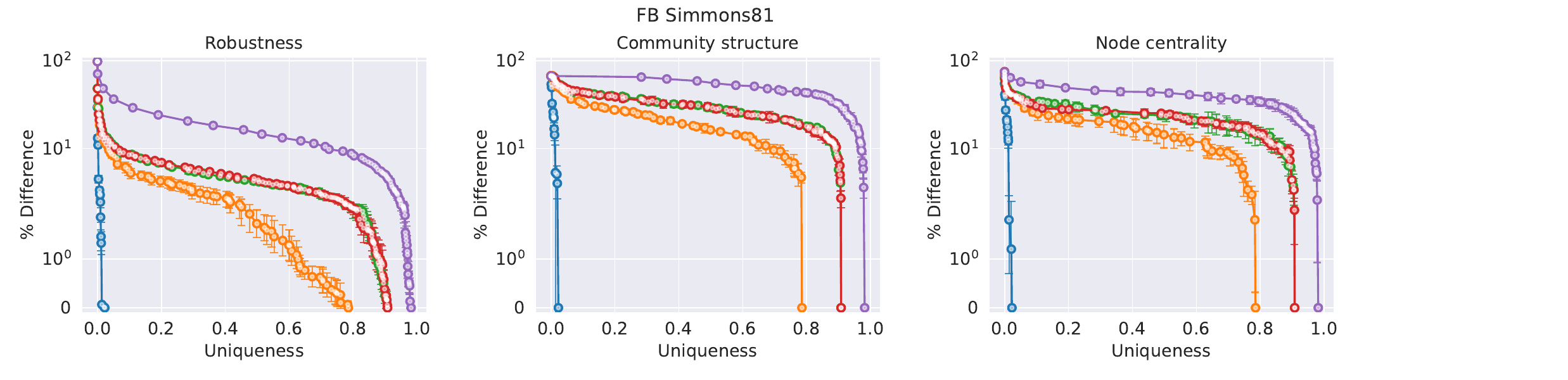}}
    {\includegraphics[width=\textwidth]{./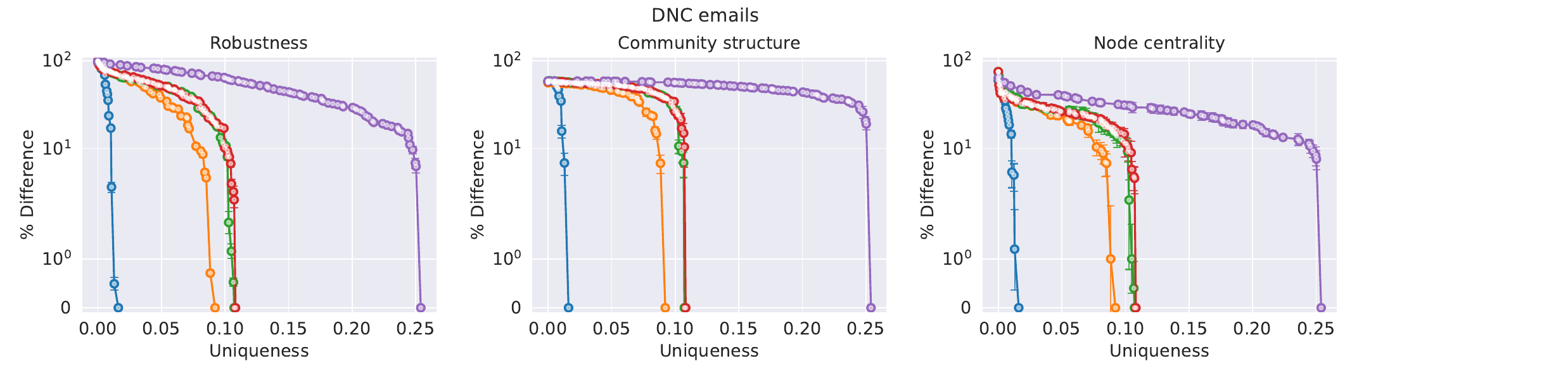}}
    {\includegraphics[width=\textwidth]{./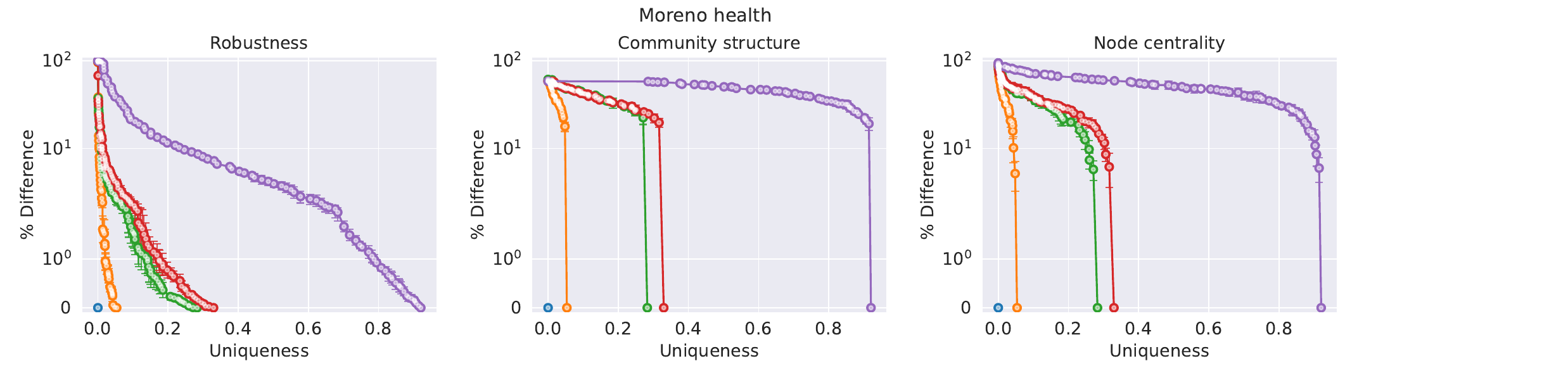}}
    {\includegraphics[width=\textwidth]{./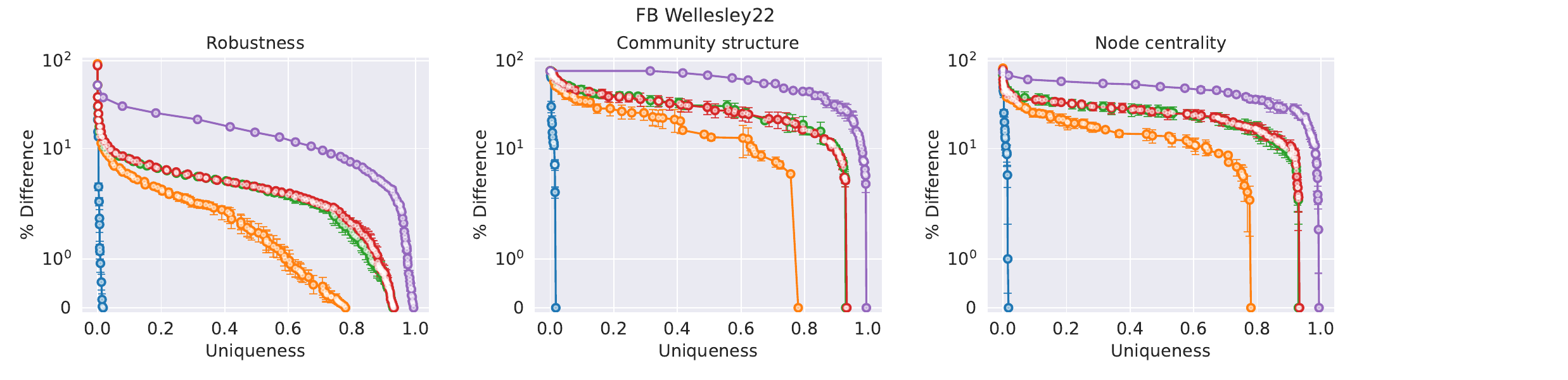}}
    \caption{Pareto optimal solutions found in terms of uniqueness (horizontal axis) and performance on a downstream task (vertical axis, one per column) for 14 networks (rows). Each dot represents a solution found when anonymizing using edge sampling for one of the five anonymity measures (colors). Error bars indicate results $\pm$ one standard deviation. }
	\label{fig:app:util}
\end{center}
\end{figure}

\end{appendices}
\end{document}